\DeclareMathOperator*{\argmax}{arg\,max}
\newcommand{\probP}{\text{I\kern-0.15em P}}
\newcommand{\emax}{E_{\text{max}}}
\newcommand{\dks}{\text{D}k\text{S}}
\newcommand{\ndks}{\text{N-D}k\text{S}}
\newcommand{\dutting}{D\"utting et al. }
\newcommand{\est}{\widehat{\operatorname{deg}}}
\newcommand{\ic}{\textit{IC}}
\newcommand{\kcd}{\textit{k}-CD}
\newcommand{\ugsc}{U-GSC}
\newcommand{\lsac}{LSAC}
\newtheorem{theorem}{Theorem}[section]
\newtheorem{lemma}[theorem]{Lemma}
\newtheorem{proposition}[theorem]{Proposition}
\newtheorem{observation}[theorem]{Observation}
\theoremstyle{definition}
\newtheorem{definition}[theorem]{Definition}
\let\vec\mathbf
\title{On Supermodular Contracts and Dense Subgraphs}
\author{
   \normalsize{Ramiro Deo-Campo Vuong} \thanks{Thomas Lord Department of Computer Science, University of Southern California. } \\
    \small{rdeocamp@usc.edu}
   \and
   \normalsize{Shaddin Dughmi} \footnotemark[1] \\
    \small{shaddin@usc.edu}
   \and
   \normalsize{Neel Patel} \footnotemark[1] \\
    \small{neelbpat@usc.edu}
   \and
   \normalsize{Aditya Prasad} \footnotemark[1] \\
    \small{aprasad4@usc.edu}
}
\date{\vspace{-5ex}}
\begin{document}

\maketitle
\begin{abstract}
    We study the combinatorial contract design problem, introduced and studied by \dutting (2021, 2022), in both the single and multi-agent settings. Prior work has examined the problem when the principal's utility function is submodular in the actions chosen by the agent(s).
    We complement this emerging literature with an examination of the problem when the principal's utility is supermodular.
    Our results apply to the unconstrained contract design problem in the binary outcome case (i.e., the principal's task succeeds or fails), and to the linear contract design problem more generally.

    In the single-agent setting, we obtain a strongly polynomial time algorithm for the optimal contract.
    This stands in contrast to the NP-hardness of the problem with submodular principal utility due to \dutting (2021).
    This result has two technical components, the first of which applies beyond supermodular or submodular utilities.
    First, we describe a simple divide-and-conquer algorithm which enumerates all the ``breakpoints'' of the principal's utility function in strongly polynomial time.
    This result strengthens and simplifies analogous enumeration algorithms from \dutting (2021), and applies to any nondecreasing valuation function for the principal.
    Second, we show that supermodular valuations lead to a polynomial number of breakpoints, analogous to a similar result by \dutting (2021) for gross substitutes valuations.

    In the multi-agent setting, we obtain a mixed bag of positive and negative results.
    First, we show that it is NP-hard to obtain any finite multiplicative approximation, or an additive FPTAS.
    This stands in contrast to the submodular case, where efficient computation of approximately optimal contracts was shown by \dutting (2022).
    Second, we derive an additive PTAS for the problem in the instructive special case of graph-based supermodular valuations, and equal costs.
    En-route to this result, we discover an intimate connection between the multi-agent contract problem and the notorious k-densest subgraph problem.
    We build on and combine techniques from the literature on dense subgraph problems to obtain our additive PTAS.
    We leave open the intriguing, and seemingly quite challenging, question of whether an additive PTAS exists more generally for multi-agent supermodular contracts.
\end{abstract}

\section{Introduction}
Contract theory is one of the backbones of microeconomics, and plays a central role in the markets of services (cf., the 2016 Nobel Prize in Economics for Hart and Holmström \cite{nobelprice2016}).
In the past few years, contract theory has gained growing interest in the CS-Econ community due to its natural connections to optimization theory and wide ranging applications to real-life economic markets including crowdsourcing platforms \cite{ho2014adaptive}, social media influencing, blockchain-based smart contracts \cite{cong2019blockchain}, and incentivizing quality healthcare \cite{bastani2016analysis}. 

In the classic hidden action principal-agent contract model \cite{holmstrom1979moral,grossman1992analysis}, the principal seeks to incentivize the agent to perform a costly action which determines the distribution of the possible rewards to the principal.
The principal incentivizes the agent via monetary transfer or ``contract" based on the observable reward outcome and not on the (hidden) action taken by the agent. 
Given a contract, the agent performs the action which yields her maximum expected utility.
The principal's goal is to maximize her expected reward minus the transfer paid to the agent.

The recent line of computational work on combinatorial contracts \cite{dutting2022combinatorial,duetting2022multi} considers the hidden-action principal-agent model in a natural combinatorial setting that captures the complex dependencies between the agent's actions.
Combinatorial contracts play a central role in the combinatorial markets of services just as combinatorial auctions are crucial in combinatorial markets of goods. 

In the single-agent combinatorial contract model, introduced by \dutting \cite{dutting2022combinatorial}, the principal seeks to delegate a task to the agent, resulting in one of two \emph{outcomes}: success (the principal gets the reward of $1$) or failure (the principal gets no reward).
There is a set of underlying costly actions $[n] = \{1,\dots, n\}$. 
Any subset of actions $S\subseteq [n]$ induces a distribution over outcomes. 
Selecting set $S$ yields expected reward $f(S)$ where $f(\cdot )$ is a \emph{reward} function defined over subsets of actions $2^{[n]}$. 
In other words, if the agent performs a set of actions $S\subseteq [n]$ then the probability of the task being successful is $f(S)$ and the agent incurs a total cost of $\sum_{i\in S} c_i$, where $c_i$ is the cost of action $i\in [n]$.

Follow up work \cite{duetting2022multi} considers the hidden action model in the multi-agent setting.
In this setting, the principal interacts with a set of agents $[n]$. 
Agents have a binary action: agent $i\in [n]$ can exert effort and incur cost $c_i$ or not. 
The action/effort taken by the set of agents $S\subseteq [n]$ induces a distribution over the possible outcomes and the principal obtains a reward of $f(S)$. 
In both \cite{dutting2022combinatorial,duetting2022multi}, the reward function exhibits diminishing returns of an action, i.e., it is \emph{submodular}. 

Prior works in algorithmic contract theory have not addressed settings with complements between actions --- an action has added value when other actions are taken.
In several real-life instances, different actions exhibit natural complementary dependencies.
For example, a healthcare startup (the principal) delegates the training of their prediction model to a consultancy firm (the agent) due to a lack of computational resources.
In this case, the data analysis team of the consultancy firm will more effectively train the model when the data mining team exerts effort to clean up the data.
Consider another example where a firm (the principal) delegates the construction of their office to a construction company.
For construction to be successful, it must meet a given a set of requirements (i.e. building standards code). 
Here, the contribution of the architecture team is negligible towards completion of the office without the efforts of an engineering team and vice-versa.

In this work, we contribute to the emerging understanding of hidden-action combinatorial contract problems in both single and multi-agent settings when the reward function is \emph{supermodular}.
The supermodularity of the reward function captures natural complementarities between the actions.

We observe that in both the single agent and multi-agent settings, the results for supermodular and submodular reward functions stand in contrast to each other. 
In the single-agent setting, calculating the optimal contract for general submodular reward functions is shown to be NP-Hard \cite{dutting2022combinatorial}. 
In contrast, we provide a strongly polynomial time algorithm that computes an optimal contract when the reward function is supermodular.
In the multi-agent setting, this dynamic is inverted: the multi-agent setting with any submodular reward function admits a constant approximation due to \cite{duetting2022multi}.
However, in this work, we show computational hardness of the multi-agent setting with supermodular reward functions, ruling out any bounded multiplicative approximations and an additive FPTAS. 
Then we design an additive PTAS for a special class of multi-agent contracts with graph-supermodular reward functions, leaving open the existence of such a PTAS more generally. 

\subsection*{Our Results and Techniques}
\subsubsection*{Single Agent Setting} 
When the outcome is binary and the principal sets contract $t$, a rational agent takes the set of actions $S \subseteq [n]$ maximizing $t\cdot f(S) - \sum_{i \in S}c_i$.
Thus, the optimal solution to the agent's problem is the result of a \textit{demand query} for $f$ with price $\frac{c_i}{t}$ for each action $i\in [n]$.
\dutting  \cite{dutting2022combinatorial} show that the optimal contract for the single agent setting is one of the \emph{break points} of the agent's utility function.\footnote{
    The principal's and agent's utility functions have break points at the same contracts. We note that \cite{dutting2022combinatorial} refers to break points as critical values.
}
First, we obtain an algorithm that enumerates all break points with $2 |\mathcal{D}_{f, c}| + 1$ calls to the demand query oracle, where $\mathcal{D}_{f, c}$ is the set of break points.
This yields a strongly polynomial time algorithm in the demand oracle model when the number of break points is polynomial.

\vspace{.2 cm}
\noindent 
\textbf{Main Result 1 (Informal) (Theorem \ref{thm:single_agent_algo}):} If the size of the of the break point set $\mathcal{D}_{f, c}$ is polynomial in the number of actions, then there exists a strongly polynomial time algorithm for computing an optimal contract in the demand oracle model.

\vspace{.2 cm}

This result strengthens prior work by \dutting \cite{dutting2022combinatorial}.
They present a binary-search inspired weakly polynomial time algorithm for combinatorial contract instances with a polynomial sized $\mathcal{D}_{f, c}$. 
In addition, our general algorithm also recovers and simplifies their strongly polynomial time algorithm for gross substitute reward functions.

Instead of binary search, our algorithm relies fundamentally on \textit{intersection contracts}: the contract at which two set of actions offer the same utility to the agent.
We observe that each break point is an intersection contract of some pair of demanded sets, and give a recursive algorithm to enumerate all breakpoints. 
Our algorithm finds the set in demand at the intersecting contract of the two ``boundary'' demanded sets. 
Either the intersection contract is a break point or a new demanded set is found upon which our algorithm recurses.

Our second positive result is the application of our general algorithm to the case of supermodular reward functions: we show that the size of the break point set is linear in the number of actions.
Since supermodular functions admit a strongly polynomial time demand oracle \cite{iwata2001polysupermodular}, our general result implies a strongly polynomial time algorithm for single-agent supermodular contracts in the value oracle model.
This stands in contrast to the proven computational hardness of settings with submodular reward \cite{dutting2022combinatorial}.

\subsubsection*{Multi Agent Setting} 
Recently, the multi-agent hidden-action setting has received more attention \cite{duetting2022multi, castiglioni2023multiagent} and is known to be significantly more challenging than the single-agent setting.
\dutting \cite{duetting2022multi} show that calculating an optimal contract in even the simplest class of reward functions --- additive rewards --- is NP-hard.
However, they obtain a constant multiplicative approximation of an optimal contract when the reward function is submodular and, more generally, fractionally subadditive.

We contribute to the emerging understanding of the multi-agent setting by showing that supermodular reward settings are more difficult than their submodular counterparts --- even graph supermodular reward functions admit neither an efficient multiplicative approximation algorithm nor an additive \emph{fully polynomial time approximation scheme} (FPTAS).
This result contrasts with the constant multiplicative approximation algorithms for submodular contracts from~\cite{duetting2022multi}.

\vspace{.2 cm}
\noindent
\textbf{Main Result 2 (Theorem \ref{thm:multi_agent_supermodular_hardness}):} The supermodular multi-agent contract problem admits no polynomial time constant multiplicative approximation algorithm nor an additive FPTAS unless $P=NP$. The hardness holds even for uniform costs and graph supermodular rewards, denoted as \ugsc.

\vspace{.2 cm}
First, we note that the principal's problem of finding an optimal contract reduces to a combinatorial optimization problem of finding the set of nodes $S\subseteq V$ of a given undirected graph $G=(V,E)$ that maximizes
\begin{equation*}
    \mu_p(S) = \left(1 - \sum_{i \in S}\frac{c_i
    \cdot \emax}{\deg_S(i)} \right) \left(\frac{|E(S)|}{\emax} \right)   , 
\end{equation*}
where $\emax = \binom{n}{2}$. 
A set yielding high utility will have many edges and few nodes with low degree.
A clique is a natural extreme of such a set, and in fact, we show hardness via a log-space reduction from the decision version of the $k$-Clique problem.
In our reduction, if the underlying graph contains a $k$-clique, the principal obtains utility at least $\Omega(\frac{1}{n^2})$. 
Otherwise, the objective function is always non-positive.
Our reduction rules out any bounded multiplicative approximation scheme and an additive FPTAS.

We observe that a similar log-space reduction converts an additive PTAS for the U-GSC problem to a polynomial time algorithm for a gap version of the \emph{normalized densest $k$ subgraph (\ndks)} problem which we call \emph{linear sized almost clique (\lsac)}.
In general, almost clique (with sublinear $k$) captures the hardness of \ndks \ \cite{braverman2017eth}.
Importantly, it appears our problem is not subsumed by \ndks; we present an example which rules out simple reductions from U-GSC to \ndks.
We also note that the known techniques for \ndks \ \cite{arora1995polynomial, barman2015approximating} do not appear to transfer to our problem since our objective cannot be represented as a polynomial integer program.

\vspace{.2 cm}
\noindent
\textbf{Main Result 3 (Theorem \ref{thm:PTAS}) Informal:}
    There exists an additive PTAS for uniform cost graph supermodular contracts in the multi-agent setting. 

\vspace{.2 cm}

Motivated by the connection between \ugsc \ and \ndks, we revisit established techniques for approximating dense subgraphs.
Our algorithm uses the general framework in \ndks \ algorithms by Barman and Arora et al.~\cite{barman2015approximating, arora1995polynomial}.
Their \ndks \ algorithms sample nodes to estimate the degree of every node to a dense subgraph, use said degree estimates in a convex program to reconstruct a ``near-optimal'' fractional solution, and employ a rounding scheme to recover an integral solution.
\ugsc \ poses an additional difficulty: whereas the density objective is robust to the inclusion a few ``low degree nodes'', our objective plummets even when selecting one such node. Moreover, the techniques to solve \ndks \ in previous works \cite{arora1995polynomial,barman2015approximating,feige1997densest} heavily rely on a natural relaxation of \ndks \ as a quadratic program.
Such a program cannot represent the \ugsc \ objective due to the \textit{sum of inverse degree} term.
To compensate for the sensitivity to low degree nodes we show the following, where $S^*$ denotes an optimal solution to U-GSC:

\begin{enumerate}
    \item There exists a subset $\tilde S \subseteq S^*$ which is ``near-optimal," and for all $v\in \tilde S$, $\deg_{\tilde S}(v) = \Omega(n)$.
    \item There is an efficient sampler that samples $O(\log n)$-nodes from the the set $\tilde S$ without any information about $\tilde S$. 
\end{enumerate}
    
We show that the set satisfying Property 1 is a core of $S^*$.\footnote{Recall that a $k$-core is the unique maximal subgraph in which every node has degree at least $k$.}
To prove that a core has near optimal utility, we analyze the change in a relaxed version of the principal's objective for every iteration of the coring algorithm.\footnote{Notably, relaxing the objective appears necessary, as the unrelaxed objective varies non-monotonically with iterations of the coring algorithm.}
Via this analysis, we show that $\tilde S$ is ``near-optimal'' (w.r.t. original objective), linear sized ($|\tilde{S}| = \Omega(n)$), and only contains nodes of high degree.
The core $\tilde S$ is conducive to reconstruction because it lacks low degree nodes, so we approximate $\tilde S$ instead of $S^*$.

Our efficient sampler relies on the fact that $\tilde S$ is of linear size.
The idea here is that any random vertex sampled from $V$ lies in the set $\tilde S$ with probability $\Omega(1)$.
Hence, when $O(\log n)$ nodes are sampled uniformly from $V$, the entire sample lies in the set $\tilde S$ with probability $\frac{1}{\poly(n)}$.
Independently repeating this sampling process $\poly (n)$ times guarantees that with probability at least $1 - \frac{1}{\poly(n)}$, one of the samples completely lies inside $\tilde S$.
Intuitively, with a sample of $\Omega(\log n)$ nodes from $\tilde{S}$, degree estimates concentrate around their true value.
A similar sampler was used by Daskalakis et al. \cite{daskalakis2011oblivious} to compute an approximate equilibrium in small probability games.

With the above properties, we accurately estimate node degrees with high probability.
Using these estimates in an LP that serves as a proxy to the principal's objective, we calculate an approximate fractional solution.
Finally, we employ randomized rounding to derive a ``near-optimal'' integral solution.
The technicalities of our argument are presented in Section~\ref{sec:PTAS}.

\subsection*{Related Work}

\paragraph{Contract Theory}
Algorithmic contract theory is an emerging frontier in CS-Econ research. 
The exploration of algorithmic aspects of contract theory began in \cite{babaioff2006combinatorial,feldman2007hidden}, which consider a model where the agent can take no more than one of $n$ explicitly-given actions.
These papers construct a linear program for each action to calculate the optimal contract, which in general may be intricate and nonlinear. On the other hand, \dutting \cite{dutting2019simple} shows that the linear contracts --- which are simpler both conceptually and computationally -- are optimal or approximately optimal in a max-min sense, given partial knowledge of reward distributions associated with the various actions. 

Particularly related to this paper is the work of \dutting \cite{dutting2022combinatorial}, which considers algorithmic contract design for a single agent when actions are combinatorial. Specifically, an agent selects a subset of $n$ given actions. They focus on costs that are additive, and principal rewards which exhibit dimishing marginal returns (i.e., are submodular).
They show hardness in general for submodular rewards, and give a polynomial time algorithm for the subclass of gross substitutes rewards.

Also related to this paper is the work on algorithmic contract design for multiple agents. Since the agents make decisions independently, the the problem here is intrinsically combinatorial in nature. We have observed two general models of multi-agent contracts.
The first model is that which is most related to this paper: $n$ agents decide whether or not to exert effort towards a particular task, and the principal only observes the task's outcome.
Babaioff et al.~\cite{babaioff2006combinatorial} and Emek and Feldman~\cite{emek2012computing} specifically study the setting in which each agent has an individual (random) outcome, and a simple boolean function maps the individual outcomes to the result of the task. 
Babaioff et al. give an algorithm to compute an optimal contract for AND networks, and Emek and Feldman show that computing an optimal contract for OR networks --- a special case of submodular functions --- is NP-hard, but admits an FPTAS.
\dutting~\cite{duetting2022multi} generalize the prior work and present a constant factor approximation algorithm for submodular, and more generally XOS, reward functions.

The second model of multi-agent contract design was studied by Castoglioni et al.~\cite{castiglioni2023multiagent}. Here, agents complete individual tasks, the outcome of which is observed by the principal. Since the principal's payment to each agent depends on their own individual outcome, this effectively removes externalities between agents. They show that, under mild regularity conditions, an optimal contract can be computed when rewards are  supermodular, and can be  approximated when rewards are submodular.

\paragraph{Graph Density and Bimatrix Game Nash Equilibrium}
For subgraphs of unconstrained size $k$, Braverman et al. \cite{braverman2017eth} show that \ndks \ requires quasi-polynomial time to approximate by an additive constant assuming the Exponential Time Hypothesis.
Barman \cite{barman2015approximating} provides an additive quasi-PTAS for this setting via a sampling algorithm based on an approximate version of Caratheodory's theorem.
In the special case where $k = \Omega(n)$, Arora et al. \cite{arora1995polynomial} present an additive PTAS by estimating the coefficients to a polynomial program via sampling.

Another problem closely related to \ndks \ is the problem of finding approximate Nash Equilibrium in bimatrix games.
Barman extends his techniques for \ndks \ to achieve a quasi-PTAS for finding approximate Nash Equilibrium \cite{barman2015approximating}.
For small probability mixed Nash Equilibirum, in which the mixed strategy is supported by a linear number of pure strategies, Barman \cite{barman2015approximating} and Daskalakis et al. \cite{daskalakis2011oblivious} both provide a PTAS.
Both approximation schemes for small probability games fundamentally rely on sampling pure strategies.

\paragraph{Other Related Work.} Further lines of work concerned with incentivizing effort by others include algorithmic delegation \cite{kleinberg2018delegated,bechtel2020delegated,bechtel2022delegated} and algorithmic design of scoring rules \cite{hartline2020optimization,liu2023surrogate,chen2021optimal}.

\section{Preliminaries}
In this work, we study the principal-agent combinatorial contract design problem in both the single and the multi-agent setting introduced in \cite{duetting2022multi,dutting2022combinatorial}. 

\paragraph{The Single Agent Model}
In the single agent setting, a principal wants to delegate a task to an agent, resulting in an outcome from the set of all possible task outcomes $\Omega$. 
There is a ground set of actions $[n] = \{1,\dots , n\}$ and each action $i\in [n]$ has cost $c_i > 0$. 
The agent can choose to select any subset of actions $S \subseteq [n]$ and incurs an additive total cost of $c(S) = \sum_{i\in S} c_i$. 
The set of selected actions $S$ induces a probability distribution $q_S:\Omega \rightarrow [0,1]$ over the set of outcomes $\Omega$. The principal obtains reward $r(\omega)$ from outcome $\omega \in \Omega$.
In other words, if the agent selects a set of actions $S$ then the an outcome $\omega$ is drawn from the distribution $q_S$ and the principal receives reward $r(\omega)$. 

\paragraph{The Multi-Agent Model}
In the multi-agent model, the principal interacts with a set of agents $[n] = \{1,\dots ,n\}$.
We consider the case when each agent is assigned a single action.
Each agent has two strategies: to complete their assigned action or not.
If agent $i\in[n]$ acts, then they incur cost $c_i > 0$. 
Similar to the single agent setting, each subset of agents $S \subseteq [n]$ that exerts effort induces a probability distribution $q_S:\Omega \rightarrow [0,1]$ over the set of outcomes $\Omega$ and the principal obtains reward $r(\omega)$ from outcome $\omega \in \Omega$.

\paragraph{Reward Function}
In both settings, we define a reward function $f:2^{[n]} \rightarrow [0,1]$ as a mapping from the set of exerted actions to the expected reward of the principal.
We assume that the reward function is a monotone set function, i.e. $f(S) \geq f(T)$ for all $T \subseteq S$. We further assume that $f$ is normalized, i.e. $f(\emptyset) = 0$ and $f(S) \leq 1$ for all $S\subseteq [n]$.
For any $i\in [n]$ and $S \subseteq [n]$, we denote the marginal value of $i$ to the set $S$: $f(i\mid S) = f(S\cup i) - f(S)$. In this work, we focus on the following classes of reward functions:
\begin{enumerate}
    \item \textbf{Supermodular Functions:} We say that the set function $f:2^{[n]} \rightarrow \mathbb R_{\geq 0}$ is supermodular if it models increasing marginal returns.
    That is, for any sets $S \subseteq T$ and action $i\in [n]\setminus T$, $f(i\mid S) \leq f(i\mid T)$.
    \item \textbf{Graph Supermodular (Edge Density) Functions:} We say that the set function $f:2^V \rightarrow \mathbb R_{\geq 0}$ is graph supermodular if there exists an undirected graph $G=(V,E)$ such that for any $S\subseteq V$, $f(S) = \frac{|E(S)|}{\emax}$, where $|E(S)|$ is the number of edges in $G$ with both endpoints in $S$ and $\emax = \binom{n}{2}$ is the maximum number of edges in a graph with $n$ nodes.
    \end{enumerate}
    
\paragraph{Moral Hazard and Contracts}
The key challenge in contract theory is that the agent/agents (in both single and multi-agent settings) have no incentive to exert costly effort because only the principal receives reward for the completed task.
Therefore, the principal designs a \emph{contract} $t:\Omega \rightarrow \mathbb R^{n}_{\geq 0}$, which maps each outcome $\omega$ to a non-negative transfer $t_i(\omega)$ for the agent $i\in [n]$.

In this work, we focus on \emph{linear contracts}.
In the single agent setting, a linear contract is defined by a scalar $t \ge 0$, such that the agent receives payment $t(\omega) = t\cdot r(\omega)$ for any outcome $\omega\in\Omega$.
In the multi-agent setting, a linear contract is defined with a vector $(t_1,\dots, t_{n}) \in \mathbb{R}_{\ge 0}^{n}$ that determines the payment to each agent $i\in [n]$ for any outcome $\omega\in\Omega$: $t_i(\omega) = t_i \cdot r(\omega)$.
Linear contracts are an important class of contracts in practice because they are simple and provide desirable theoretical properties.
In the commonplace setting of binary outcomes (e.g., a task that can succeed or fail), linear contracts are optimal.
In general, they are known to be max-min optimal when only the expected reward of each set of agents is known \cite{dutting2019simple,duetting2022multi}. 

We assume that the outcome space is binary $\Omega = \{0, 1\}$ with rewards $r(0) = 0$ and $r(1) = 1$ W.L.O.G.
In our setting, the reward function $f$ can be interpreted as a probability of outcome $1\in\Omega$ or, equivalently, probability of successfully completing the underlying task.
Here, in the single agent setting, we can simplify the contract $t(\omega)$ for $\omega \in \Omega = \{0,1\}$ to scalar $t\in \mathbb R_{\geq 0}$ by assuming $t(0) = 0$ W.L.O.G. Similarly, for the multi-agent setting, we denote the contract by $(t_1,\dots , t_n) \in \mathbb R_{\geq 0}^n$.

Thus, in the single agent setting, if the agent performs the set of actions $S\subseteq [n]$ then the expected transfer to the agent is $t  \cdot f(S)$.
In the multi-agent setting, when the set of agents $S\subseteq V$ exert effort then the expected transfer to the agent $i\in [n]$ is $t_i \cdot f(S)$.  

\subsection*{Utility Functions, Oracles, and Equilibrium} 
\paragraph{The Single Agent Model}
In the single agent setting, given the set of actions $[n]$, reward function $f$, and linear contract $t$, the expected utility of the agent when they perform a set of actions $S\subseteq [n]$ is
\begin{equation} \label{eq:single_agent_utility}
    \mu_a(S, t) =  t\cdot f(S) - c(S), 
\end{equation}
The principal's expected utility under contract $t$ if the agent takes actions $S\subseteq [n]$ is
\begin{equation}\label{eq:single_principal_utility}
    \mu_p(S, t) = f(S) - t \cdot f(S) .  
\end{equation}
Thus, for a given linear contract $t$, the agent's best response is to select a set of actions $S\subseteq [n]$ that maximizes $\mu_a(S, t)$.
We assume that the agent breaks ties in favor of the principal: the agent will select a set with the greatest reward if there are multiple sets that provide her equal utility.
Hence, the optimal linear contract for the principal, assuming a rational agent, can be expressed as the value of $t$ maximizing the following optimization problem: 
\begin{flalign}
    & & \max_{t\ge 0, S\subseteq [n]} \quad &\mu_p(S, t) \nonumber \\
    & & \text{s.t.} \quad & \mu_a(S,t) \ge \mu_p(S',t),
    &\quad \forall S'\subseteq [n] \nonumber
\end{flalign}

In this setting, we assume access to a \textit{value oracle}, which we denote as $f$.
A value oracle computes reward $f(S)$ given an input $S\subseteq [n]$.
We assume access to a \textit{demand oracle}, which is commonly used in algorithmic game theory.
This oracle is a function $\phi$ that takes as input a vector of prices $p \in \mathbb{R}^n$ and outputs a set $S\subseteq [n]$ that maximizes $f(S) - \sum_{i \in S} p_i$.

The last oracle we use is an \emph{agent oracle} $\Phi$ that takes as input a contract $t$ and outputs the set of actions taken by a rational agent who observes tie-breaking rules (ties are broken in favor of action sets with higher reward).
Formally,
\begin{equation*}
    \Phi(t) \in \arg\max_{S \subseteq [n]} t\cdot f(S) - c(S)
\end{equation*}
The agent oracle $\Phi$ can be implemented efficiently with access to a demand oracle $\phi$ by using prices $p_i = c_i / t$ as input to the demand oracle ($c_i / t$ is infinity when $t = 0$).
Tie breaking behavior can be enforced by increasing the contract $t$ by an infinitesimally small amount.
Thus, under our assumption that we have access to a demand oracle, we also have access to an agent oracle.
We say $S$ is \textit{in demand} at contract $t$ if $S = \Phi(t)$ (tie-breaking is applied).
More generally, we say $S$ is \textit{in demand} if $\exists t$ such that $S = \Phi(t)$.

\paragraph{The Multi-Agent Model}
For a fixed set of agents $[n]$ each with a single action, a reward function $f$, and a linear contract $(t_1,\dots ,t_n )$, the expected utility of agent $i\in[n]$ is
\begin{equation}\label{eq:multi_agent_utility}
    \mu_a(S, t) = \begin{cases}
        f(S)\cdot t_i - c_i \quad &\text{ if } i\in S\\
        f(S)\cdot t_i \quad &\text{ otherwise }
    \end{cases}
\end{equation}

and the principal's expected utility is
\begin{equation} \label{eq:multi_principal_utility}
    \mu_p(S, t) = \left( 1 - \sum_{i\in[n]} t_i \right)\cdot f(S)
\end{equation}

We assume that agents act rationally and attempt to maximize their expected utility.
Notably, the action or inaction of one agent determines the payout of other agents.
Thus, as \cite{duetting2022multi, babaioff2006combinatorial} find, the agents partake in a game.
We specify that the agents play an ordinal potential game \cite{tijs2003introduction}.\footnote{The potential function for the agents' game is $f(S) - \sum_{i\in[n]} \frac{c_i}{t_i}$ for a fixed contract $t_1,\dots ,t_n$.}
A set $S$ of agents/actions forms a pure Nash Equilibrium if and only if:
\begin{align*}
  &f(S)\cdot t_i - c_i
  \geq f(S\setminus i)\cdot t_i
  && \forall i\in S\\
  &f(S)\cdot t_i
  \geq f(S \cup i)\cdot t_i - c_i
  && \forall i\notin S.
\end{align*}

Using the above equilibrium condition, \cite{duetting2022multi, babaioff2006combinatorial} find a closed form of the contract minimizing the total payment of the principal while incentivizing a set $S\subseteq [n]$ of agents to act. The best contract to incentivize $S\subseteq [n]$ if at all possible is
\begin{align*}
    &t_i = \frac{c_i}{f(i\mid S)} &\text{for all } i\in S \\
    &t_i = 0 &\text{for all } i\notin S
\end{align*}
Where $\frac{c_i}{f(i\mid S)}$ is infinity if $f(i\mid S) = 0$. With this payment scheme, the principal's objective is a set function:
\begin{equation}
    g(S) = \left( 1 - \sum_{i\in S} \frac{c_i}{f(i\mid S)} \right) f(S).
\end{equation}

The principal's problem is to maximize this objective.
Next, we define some notation which will be used throughout the paper. 

\paragraph{Graph Theory Notation} 
We denote an undirected graph as $G=(V,E)$ where $V$ is the set of nodes and $E$ is the set of edges.
The degree of a node $u\in V$ to a subgraph induced by $S \subseteq V$, denoted $\deg_S(u)$, is the number of neighbors $u$ has in $S$.
We denote the set of neighbors of a node $v\in V$ as $\mathcal{N}(v)$. 
The $k$-core of a graph $G$ is the unique maximal induced subgraph of vertices $S_k \subseteq V$ such that every node has degree at least $k$.

\section{Efficient Contracts for Single Agent Model} \label{Single_Agent_Section}
We explore the single agent combinatorial contracts problem presented by \dutting \cite{dutting2022combinatorial} and provide a strongly polynomial time algorithm to enumerate the breakpoints of the agent's utility function. 
Then, we focus on the setting in which the principal's reward function $f$ is supermodular and show that calculating the optimal contract in this setting is tractable.
This contrasts prior hardness results when $f$ is submodular \cite{dutting2022combinatorial}.  

\subsection{Problem Structure}
In the single agent model, the agent's utility at contract $t\in[0,1]$ is $\mu_a(t) = \max_{S \subseteq [n]} t\cdot f(S) - c(S)$.
It is easy to observe that the agent's utility w.r.t. contract $t$ is a convex and non-decreasing piece-wise linear function.
We define $T_S\subseteq [0,1]$ as the set of contracts for which $S$ is in demand.
It is known that the set $T_S$ is a left closed and right open interval, which follows from Observation~3.3 by \dutting \cite{dutting2022combinatorial}.
We let $\mathcal{D}_{f,c} = \{S\subseteq [n] : \exists t\in[0,1] \text{ such that } S=\Phi(t)\}$ contain all sets in demand.

If the principal was constrained to incentivize a specific set $S\in \mathcal{D}_{f, c}$, she would set the contract $\kappa_S = \min_{t \in T_S} t$ to minimize her payment to the agent.
We say that this contract, which is a \emph{break point} of the agent's piece-wise utility function, is the break point of set $S$.
Importantly, \dutting \cite{dutting2022combinatorial} show that the optimal contract is always a break point of the agent's utility function.

Our contribution relies on an important notion we call an intersection contract.
Formally, an intersection contract is the contract at which two sets of actions provide the agent equal utility.
For sets $L,R \subseteq[n]$, the intersection contract $\ic(L,R)$ is given by:
\[
    \ic(L, R) = \frac{c(L) - c(R)}{f(L) - f(R)}
\]
Intersection contracts are a generalization of break points.
Specifically, break points are the intersection contract of some pair of sets in the demand, but not all intersection contracts are break points.
To characterize some useful properties of intersection contracts, we define an ordering on sets in demand $\mathcal{D}_{f,c}$.

\begin{definition}[Ordering Sets in Demand]
    We define a strict total ordering $\prec$ on sets in demand as follows:
    Set $L$ precedes set $R$, denoted as $L\prec R$, means that $f(L) < f(R)$ or, equivalently, $\kappa_L < \kappa_R$.
\end{definition}

\begin{figure}[H]
\centering
\begin{subfigure}[t]{0.49\linewidth}
    \centering
    \includegraphics[width=\linewidth]{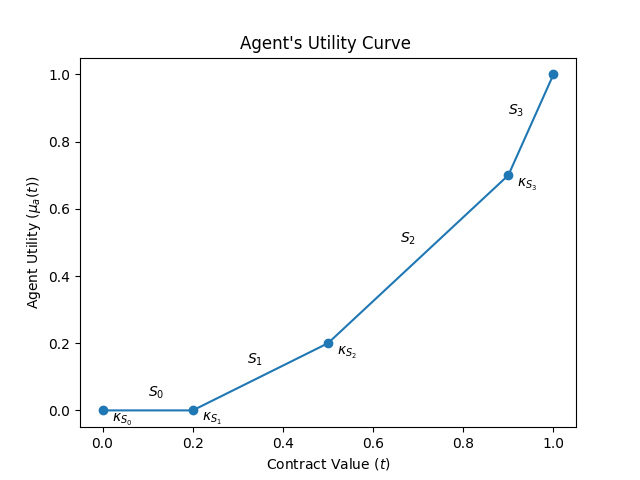}
    \caption{}
    \label{sub_fig:agent_curve}
\end{subfigure}\hfil 
\begin{subfigure}[t]{0.49\linewidth}
    \centering
    \includegraphics[width=\linewidth]{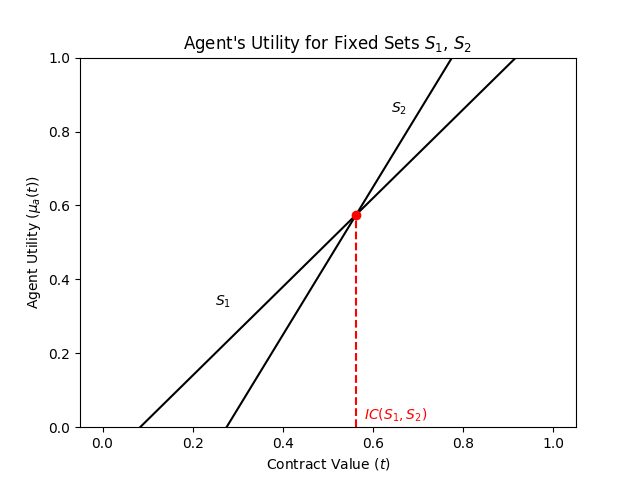}
    \caption[b]{}
    \label{sub_fig:intersection}
 \end{subfigure}
    \caption{
        Figure \ref{sub_fig:agent_curve} is an example of an agent's utility as a function of the contract $t$. 
        The sets $S_0, \dots, S_3$ are in demand with break points $\kappa_{S_0}, \dots, \kappa_{S_3}$, respectively. 
        Figure \ref{sub_fig:intersection} highlights the \textit{intersection contract} $\ic(S_1, S_2)$.
    }
\end{figure}

The following observations characterize the relationship between break points and intersection contracts.
The intersection contract of two adjacent sets in demand (sets $L,R\in \mathcal{D}_{f,c}$ are adjacent if there does not exist $S\in\mathcal{D}_f$ such that $L\prec S\prec R$) is a break point of the agent's piece-wise utility function.
Furthermore, the intersection contract of non-adjacent sets in demand yields a contract at which a different set is in demand.

\begin{observation} \label{adj_intersection}
    Let $L, R \in \mathcal{D}_{f,c}$ with $L \prec R$.
    The break point of $R$, denoted as $\kappa_{R}$, is the intersection contract of $L$ and $R$ if and only if $L$ and $R$ are adjacent. 
\end{observation}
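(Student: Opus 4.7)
The plan is to argue geometrically. The agent's utility $\mu_a(t) = \max_S (t\,f(S) - c(S))$ is the upper envelope of the affine functions $\ell_S(t) := t\,f(S) - c(S)$, and $\mathcal{D}_{f,c}$ is exactly the collection of $S$ whose line contributes a segment to this envelope. Along the envelope the contributing lines appear in order of increasing slope, and since $\mathrm{slope}(\ell_S) = f(S)$, this matches the ordering $\prec$. Consequently $\kappa_S$ is the abscissa at which $\ell_S$ meets the line immediately to its left on the envelope, namely the line contributed by the $\prec$-predecessor of $S$ in $\mathcal{D}_{f,c}$.

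For the ($\Leftarrow$) direction, suppose $L$ and $R$ are adjacent with $L \prec R$. Then $L$'s segment of the envelope ends precisely where $R$'s begins, at $t = \kappa_R$, so $\ell_L(\kappa_R) = \ell_R(\kappa_R)$. Solving $\kappa_R f(L) - c(L) = \kappa_R f(R) - c(R)$ yields $\kappa_R = \ic(L, R)$.

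For the ($\Rightarrow$) direction I argue by contrapositive. Suppose $L \prec R$ are not adjacent, so some $S \in \mathcal{D}_{f,c}$ satisfies $L \prec S \prec R$. Let $R^-$ denote the immediate $\prec$-predecessor of $R$ in $\mathcal{D}_{f,c}$; then $L \prec R^- \prec R$, and the easy direction applied to the adjacent pair $(R^-, R)$ gives $\kappa_R = \ic(R^-, R)$. It therefore suffices to show $\ell_L(\kappa_R) < \mu_a(\kappa_R)$, since that strict inequality rearranges to $\kappa_R \neq \ic(L, R)$.

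The one substantive step, and the anticipated main obstacle, is ruling out the non-generic possibility that $\ell_L$, $\ell_{R^-}$, and $\ell_R$ are concurrent at $\kappa_R$. I dispose of it via a slope comparison: assume for contradiction $\ell_L(\kappa_R) = \ell_{R^-}(\kappa_R)$. Because $f(L) < f(R^-)$, the line $\ell_L$ has strictly smaller slope than $\ell_{R^-}$, so the two lines intersect only at $\kappa_R$ and $\ell_L(t) > \ell_{R^-}(t)$ for every $t < \kappa_R$. But $R^- = \Phi(t)$ on the non-degenerate interval $[\kappa_{R^-}, \kappa_R)$, so there $\ell_{R^-}(t) = \mu_a(t) \geq \ell_L(t)$, a contradiction. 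Hence $\ell_L(\kappa_R) < \mu_a(\kappa_R) = \ell_R(\kappa_R)$, completing the argument.
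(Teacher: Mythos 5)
Your proof is correct. The paper states Observation~\ref{adj_intersection} without an explicit proof, and your argument supplies exactly the justification the authors leave implicit: the agent's utility is the upper envelope of the lines $t\mapsto t\,f(S)-c(S)$, the demanded sets appear along it in order of increasing slope (i.e., in the order $\prec$), and the only delicate point --- ruling out that $\ell_L$, $\ell_{R^-}$, and $\ell_R$ are concurrent at $\kappa_R$ when $L$ and $R$ are not adjacent --- is correctly handled by your slope comparison on the non-degenerate interval $[\kappa_{R^-},\kappa_R)$ where $R^-$ is in demand.
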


\begin{observation}\label{lemma:break_point}
    Let $L$ and $R$ be two non-adjacent sets in demand such that $L\prec R$.
    Then, the set in demand $S$ at contract $t=\ic(L,R)$ satisfies $L\prec S \prec R$. 
\end{observation}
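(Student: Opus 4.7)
The plan is to prove two facts: (i) at the contract $t^* := \ic(L, R)$, the agent's utility strictly exceeds the common value $\ell_L(t^*) = \ell_R(t^*)$, where I write $\ell_S(t) := t\cdot f(S) - c(S)$, so the set $S$ in demand at $t^*$ is neither $L$ nor $R$; and (ii) the monotonicity of the $\ell_S$ lines then pins $f(L) < f(S) < f(R)$.

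For (i), non-adjacency of $L$ and $R$ guarantees that at least one set in $\mathcal{D}_{f,c}$ lies strictly between them. I would let $M$ be the unique set adjacent to $L$ with $L \prec M$; by non-adjacency, $M \prec R$ as well. By Observation~\ref{adj_intersection}, $\kappa_M = \ic(L, M)$, so $\ell_L(\kappa_M) = \ell_M(\kappa_M)$. Because $M$ is in demand at $\kappa_M$ while $f(R) > f(M)$, the tie-breaking convention upgrades the optimality inequality $\ell_M(\kappa_M) \ge \ell_R(\kappa_M)$ to the strict form $\ell_M(\kappa_M) > \ell_R(\kappa_M)$: otherwise $R$, having higher reward than $M$, would be selected at $\kappa_M$. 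Combining the two gives $\ell_L(\kappa_M) > \ell_R(\kappa_M)$; since $\ell_R - \ell_L$ is strictly increasing in $t$ (slope $f(R)-f(L)>0$) and vanishes at $t^*$, this places $\kappa_M < t^*$. Because $\ell_M$ overtakes $\ell_L$ strictly for $t > \ic(L,M) = \kappa_M$ (using $f(M) > f(L)$), we conclude $\ell_M(t^*) > \ell_L(t^*) = \ell_R(t^*)$. The set $S$ in demand at $t^*$ therefore satisfies $\ell_S(t^*) \ge \ell_M(t^*) > \ell_L(t^*) = \ell_R(t^*)$, and in particular $S \notin \{L, R\}$.

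For (ii), suppose toward contradiction that $f(S) \le f(L)$. If $f(S) < f(L)$, then $\ell_S - \ell_L$ has strictly negative slope and is positive at $t^*$; since $\kappa_L < t^*$ (from $L$ optimal at $\kappa_L$, with the same tie-breaking argument used above), this forces $\ell_S(\kappa_L) > \ell_L(\kappa_L)$, contradicting optimality of $L$ at $\kappa_L$. If $f(S) = f(L)$, then $\ell_S - \ell_L$ is the constant $c(L) - c(S) > 0$, so $\ell_S > \ell_L$ identically and $L$ is never in demand, again a contradiction. Thus $f(L) < f(S)$, i.e., $L \prec S$. The inequality $S \prec R$ follows by a symmetric argument anchored at $\kappa_R > t^*$: the same sign analysis of $\ell_S - \ell_R$ rules out both $f(S) > f(R)$ and $f(S) = f(R)$.

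The main obstacle I expect is step~(i), specifically the careful invocation of the tie-breaking convention to promote the weak optimality inequality $\ell_M(\kappa_M) \ge \ell_R(\kappa_M)$ to a strict one. Without that strictness, the three lines $\ell_L, \ell_M, \ell_R$ could in principle be concurrent at $t^*$, and one would lose the strict gap between $\ell_S(t^*)$ and $\ell_L(t^*) = \ell_R(t^*)$ that drives the rest of the argument.
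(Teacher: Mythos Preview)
The paper states this result as an observation and does not supply a proof; it is left to the reader, relying on the geometric picture of the agent's utility as the upper envelope of the lines $\ell_S(t) = t\cdot f(S) - c(S)$ (cf.\ Figure~\ref{sub_fig:intersection}). Your proposal correctly fills in this gap, and the argument is essentially the natural one: use non-adjacency to exhibit an intermediate demanded set $M$ whose line strictly dominates $\ell_L=\ell_R$ at $t^*$, then use the slopes of the lines to sandwich $f(S)$ between $f(L)$ and $f(R)$.

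A few minor comments. Your justification of $\kappa_L < t^*$ via ``$L$ optimal at $\kappa_L$, with the same tie-breaking argument'' is valid (at $\kappa_L$ we have $\ell_L(\kappa_L) > \ell_R(\kappa_L)$ strictly, else $R$ would be selected by tie-breaking, and then the increasing difference $\ell_R-\ell_L$ locates $t^*$ to the right), but you could equally note that $\kappa_L < \kappa_M < t^*$ already follows from part~(i) and the ordering $L\prec M$. For the symmetric inequality $\kappa_R > t^*$ in part~(ii), the cleanest route is the same tie-breaking trick applied at $\kappa_R$: if $\ell_R(\kappa_R)=\ell_L(\kappa_R)$ then $\kappa_R=t^*$ and $R=\Phi(t^*)$, contradicting $S\neq R$; hence $\ell_R(\kappa_R)>\ell_L(\kappa_R)$ and $\kappa_R>t^*$. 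With these small clarifications the proof is complete and airtight.
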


\subsection{Strongly Polynomial Time Algorithm for Poly-Size Break Point Sets}
We give a simple divide-and-conquer algorithm that reconstructs the agent's utility curve and finds all the breakpoints with $\Theta(|\mathcal{D}_{f,c}|)$ queries to a demand oracle and value oracle. 
This algorithm strengthens and generalizes prior work by \dutting \cite{dutting2022combinatorial}, in which a strongly polynomial algorithm was presented for gross substitutes reward functions and a weakly polynomial algorithm was presented for arbitrary reward functions.

For intuition, we let sets $L$ and $R$ be in demand.
To find all the breakpoints between $\kappa_L$ and $\kappa_R$ we first calculate $t = \ic(L,R)$.
If $t$ is the break point corresponding with $R$, then $L$ and $R$ are adjacent.
Otherwise, they are not adjacent and we find the set $S$ in demand at contract $t$ with $ L\prec S \prec R$ due to Observation~\ref{lemma:break_point}.
We divide our problem in two and recurse on set pairs $L, S$ and $S, R$.

\begin{algorithm} 
    \caption{\textsc{BreakPoint}($L, R$)}\label{CritVal}
    \textbf{Result:} A set of break points. \\
    \textbf{Input:} $L,R\subseteq [n]$ in demand with $f(L) < f(R)$. \\
    \textbf{Require:} Access to an agent oracle $\Phi$ (derived from a demand oracle), value oracle $f$, and cost function $c$.
    \begin{algorithmic}
        \State $t_S \gets \ic(L, R)$
        \State $S \gets \Phi(t_S)$
        
        \If{$f(S) = f(R)$}
            \State \Return $\{t_S\}$
        \Else
            \State \Return $\textsc{BreakPoint}(L,S) \cup \textsc{BreakPoint}(S,R)$
        \EndIf
    \end{algorithmic}
\end{algorithm}

\begin{theorem}[Single Agent Algorithm]\label{thm:single_agent_algo}
    Algorithm~\ref{CritVal} computes all break points with no more than $2|\mathcal{D}_{f,c}| + 1$ queries to a demand oracle and $\Theta(|\mathcal{D}_{f,c}|)$ queries to a value oracle.
    Thus, if $|\mathcal{D}_{f, c}| = \poly(n)$ and the demand and value oracles are implemented in strongly polytime, Algorithm~\ref{CritVal} finds the optimal contract in strongly polytime.
\end{theorem}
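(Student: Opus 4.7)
The plan is to establish correctness by strong induction on the number of demanded sets lying between $L$ and $R$ under the order $\prec$, and then bound the oracle query count by an analysis of the binary recursion tree.

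For \emph{correctness}, one proves that for any $L, R \in \mathcal{D}_{f,c}$ with $L \prec R$, the call \textsc{BreakPoint}$(L,R)$ returns exactly $\{\kappa_S : S \in \mathcal{D}_{f,c},\ L \prec S \preceq R\}$. Induct on $k := |\{S \in \mathcal{D}_{f,c} : L \preceq S \preceq R\}|$. In the base case $k = 2$, the sets $L$ and $R$ are adjacent, so by Observation~\ref{adj_intersection} we have $\ic(L,R) = \kappa_R$. Because the agent breaks ties in favor of the principal and $f(R) > f(L)$, we get $\Phi(\kappa_R) = R$, so the algorithm passes the test $f(S) = f(R)$ and correctly returns $\{\kappa_R\}$. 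In the inductive step, Observation~\ref{lemma:break_point} guarantees that $S := \Phi(\ic(L,R))$ satisfies $L \prec S \prec R$, so $f(S) < f(R)$ and the algorithm recurses; both sub-instances have strictly smaller $k$, and their outputs union to the desired break-point set.

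For the \emph{query count}, let $T(L,R)$ denote the number of demand-oracle queries issued by \textsc{BreakPoint}$(L,R)$ and set $m := |\{S \in \mathcal{D}_{f,c} : L \prec S \preceq R\}|$. A parallel induction shows $T(L,R) = 2m - 1$: the base case $m=1$ uses one query, and when the split yields sub-counts $m_1 + m_2 = m$ the total is $1 + (2m_1 - 1) + (2m_2 - 1) = 2m - 1$. To enumerate all break points the algorithm initializes $L_0 := \Phi(0)$ and $R_0 := \Phi(1)$ using two extra demand queries, which gives $m = |\mathcal{D}_{f,c}| - 1$ and therefore a total of at most $2|\mathcal{D}_{f,c}| - 1 \leq 2|\mathcal{D}_{f,c}| + 1$ demand queries. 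Each recursive call additionally issues $O(1)$ value-oracle calls to compute $\ic(L,R)$ and to check $f(S) = f(R)$, giving $\Theta(|\mathcal{D}_{f,c}|)$ value queries overall. Evaluating $\mu_p$ at each enumerated break point and returning the best then identifies the optimal contract, since \dutting~\cite{dutting2022combinatorial} show the optimum must be a break point; when the oracles run in strongly polynomial time and $|\mathcal{D}_{f,c}| = \poly(n)$, the whole procedure is strongly polynomial.

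The main subtlety is the base case of correctness, which is driven entirely by the principal-favoring tie-breaking rule: without it, $\Phi(\kappa_R)$ could return $L$ and the algorithm would fail to certify adjacency, potentially looping. The inductive step is powered by Observation~\ref{lemma:break_point}, which is the structural engine ensuring every non-adjacent split discovers a genuinely new in-demand set, and hence that the recursion tree has exactly $m$ leaves and $m - 1$ internal nodes.
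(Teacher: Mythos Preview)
Your proof is correct and follows essentially the same approach as the paper: strong induction on the number of demanded sets between $L$ and $R$, using Observation~\ref{adj_intersection} for the base case and Observation~\ref{lemma:break_point} for the split. Your treatment is in fact more careful than the paper's, which only asserts $\Theta(|\mathcal{D}_{f,c}|)$ queries; you give the exact recurrence $T(L,R)=2m-1$ and explicitly account for the two initialization queries and the tie-breaking subtlety at $\kappa_R$.
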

\begin{proof}
    We use induction on the \emph{distance} between $L$ and $R$, defined as $d_{L,R}:= |\{S \in \mathcal D_f : L \prec S \prec R \}|+1$.
    When $d_{L,R} = 1$, the sets $L$ and $R$ are adjacent. 
    Therefore, due to Observation~\ref{adj_intersection}, there is only one breakpoint $\ic(L,R) = \kappa_R$ which is found by Algorithm~\ref{CritVal}. 

    Now, suppose the algorithm finds all breakpoints between $\kappa_L$ and $\kappa_R$ (including $\kappa_R$) whenever $d_{L,R}\leq k$.
    Let $L,R$ be sets satisfying $d_{L,R}= k+1$.
    Due to Observation~\ref{lemma:break_point}, the demanded set $S = \Phi(\ic(L,R))$ satisfies $L \prec S \prec R$.
    Therefore $d_{L,S}, d_{S,R} \leq k$.
    Algorithm~\ref{CritVal} makes recursive calls and, by the inductive hypothesis, finds all break points in the interval $(\kappa_L, \kappa_R]$. 
    With inputs $L = \emptyset = \Phi(0)$ and $R=\Phi(1)$\footnote{
        The optimal contract is in the interval $[0,1]$, because contracts $t > 1$ yield the principal negative utility.
    }, Algorithm~\ref{CritVal} finds all breakpoints except the trivial break point $t = 0$ in $\Theta(|\mathcal{D}_f|)$ queries to the value and demand oracle.
\end{proof}

\subsection{Supermodular Contracts}
We show that when the reward function $f$ is supermodular, there are at most $n + 1$ sets in demand ($|\mathcal{D}_{f, c}|\le n+1$).
Combined with the fact that there is a strongly polytime demand oracle for supermodular functions \cite{iwata2001polysupermodular}, Theorem~\ref{thm:single_agent_algo} implies a strongly polytime algorithm for the single-agent supermodular contract problem.

Intuitively, when the set $S$ is in demand at contract $t$, every action $i\in S$ provides enough value to offset its cost $c_i$.
If the principal takes an additional action $j \in [n] \setminus S$ at a larger contract, the value of $i\in S$ offsets its cost $c_i$ even more because its marginal value increases due to supermodularity.
Thus, a rational agent will perform every action in $S$ at higher contracts.
Since there are only $n$ actions, there can only be at most $n + 1$ sets in demand.

In contrast, for submodular reward functions, the value of an action $i\in S$ may decrease with the addition of action $j\in [n] \setminus S$.
The value of action $i$ may no longer offset its cost $c_i$ even at higher contracts. This explains why there may be more than $n+1$ sets in demand for submodular reward functions \cite{dutting2022combinatorial}.

\begin{lemma} \label{thm:poly_sets_in_demand_single}
When $f$ is supermodular, there can be at most $n + 1$ sets in demand ($|\mathcal{D}_{f, c}| \leq n + 1$).
\end{lemma}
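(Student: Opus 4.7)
The plan is to strengthen the statement by showing that $\mathcal{D}_{f,c}$ is a chain under set inclusion; since any chain in $2^{[n]}$ has length at most $n+1$, the bound $|\mathcal{D}_{f,c}| \le n+1$ follows immediately. Concretely, I will prove: if $S = \Phi(t)$ and $S' = \Phi(t')$ with $t < t'$, then $S \subseteq S'$. This formalizes the intuition in the paragraph preceding the lemma: raising the contract only increases the incentive to include any action the agent was already willing to perform, so the demanded sets grow monotonically.

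For the nesting claim, I would argue by contradiction. Suppose $S \not\subseteq S'$, so $S \setminus S' \neq \emptyset$ and in particular $c(S\setminus S') > 0$ because every $c_i > 0$. The key exchange pairs $S$ with its meet $S \cap S'$ and $S'$ with its join $S \cup S'$. Optimality of $S$ at contract $t$ against the alternative $S \cap S'$ yields
\begin{equation*}
  t \cdot \bigl(f(S) - f(S \cap S')\bigr) \;\ge\; c(S) - c(S \cap S') \;=\; c(S \setminus S'),
\end{equation*}
and optimality of $S'$ at contract $t'$ against $S \cup S'$ yields
\begin{equation*}
  t' \cdot \bigl(f(S \cup S') - f(S')\bigr) \;\le\; c(S \cup S') - c(S') \;=\; c(S \setminus S').
\end{equation*}
Chaining these and invoking supermodularity in its lattice form $f(S \cup S') - f(S') \ge f(S) - f(S \cap S')$ gives
\begin{equation*}
  t' \cdot \bigl(f(S) - f(S \cap S')\bigr) \;\le\; t \cdot \bigl(f(S) - f(S \cap S')\bigr),
\end{equation*}
which contradicts $t < t'$ provided $f(S) - f(S \cap S') > 0$.

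The one remaining case is the degenerate one in which $f(S) = f(S \cap S')$ while $S \cap S' \subsetneq S$. Because costs are strictly positive, $c(S \cap S') < c(S)$, so $S \cap S'$ strictly beats $S$ in the agent's optimization at contract $t$ (delivering strictly higher agent utility and no less reward to the principal), contradicting $S \in \Phi(t)$ even under the stated tie-breaking rule. Hence this case cannot occur, and the chain property—and with it the bound $|\mathcal{D}_{f,c}| \le n+1$—is established. The main obstacle to spot is the correct exchange pairing: by using $S$ against $S \cap S'$ on one side and $S'$ against $S \cup S'$ on the other, the cost terms on both sides collapse to the common quantity $c(S \setminus S')$, which is precisely what lets the supermodular inequality close the loop; any other pairing leaves an asymmetric cost gap that supermodularity alone cannot reconcile.
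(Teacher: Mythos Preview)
Your proposal is correct and takes essentially the same approach as the paper: both establish that $\mathcal{D}_{f,c}$ is a chain by showing that if $S$ and $S'$ are demanded at $t<t'$ then $S\subseteq S'$, using the lattice form of supermodularity together with optimality comparisons against $S\cap S'$ and $S\cup S'$. The only cosmetic difference is that the paper packages the argument as a single computation showing $\mu_a(S\cup S',\,t')>\mu_a(S',\,t')$, whereas you separate it into two exchange inequalities and handle the degenerate case $f(S)=f(S\cap S')$ explicitly.
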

\begin{proof}
    Let $S_1$ and $S_2$ be sets in demand at contracts $t_1$ and $t_2$, respectively with $t_1\le t_2$.
    We show that $S_1 \subseteq S_2$.
    For contradiction, assume that $t_1 \le t_2$ and $S_1$ is not a subset of $S_2$. We show that $S_1 \cup S_2$  is in demand at contract $t_2$ to complete our contradiction:
    \begin{align}
        \mu_a(S_1 \cup S_2, t_2) &= f(S_1\cup S_2)t_2 - c(S_1\cup S_2) \nonumber \\
        &\ge [f(S_1) + f(S_2) - f(S_1\cap S_2)]t_2 - [c(S_1) + c(S_2) - c(S_1\cap S_2)] \label{sup_of_f} \\
        &= [f(S_2)t_2 - c(S_2)] + [f(S_1) - f(S_1\cap S_2)]t_2 - c(S_1) + c(S_1\cap S_2) \nonumber \\
        &= \mu_a(S_2, t_2) + [f(S_1) - f(S_1\cap S_2)]t_1 + [f(S_1) - f(S_1\cap S_2)](t_2 - t_1) - c(S_1) + c(S_1\cap S_2) \nonumber \\
        &= \mu_a(S_2, t_2) + \mu_a(S_1, t_1) - \mu_a(S_1\cap S_2, t_1) + [f(S_1) - f(S_1\cap S_2)](t_2 - t_1) \nonumber \\
        & > \mu_a(S_2, t_2) \label{high_t_high_f}
    \end{align}
    Line \ref{sup_of_f} holds by the supermodularity of $f$.
    Line \ref{high_t_high_f} leverages two main properties.
    First, $S_1$ is in demand at $t_1$, thus it provides more utility than $S_1\cap S_2$ at that contract: $\mu_a(S_1, t_1) \ge \mu_a(S_1\cap S_2, t_1)$.
    Second, $f(S_1)$ is greater than $f(S_1\cap S_2)$ by the monotonicity of $f$ and $t_2 \ge t_1$.
    This yields that $[f(S_1) - f(S_1\cap S_2)](t_2 - t_1) \ge 0$.

    The sets in demand must satisfy $S_1 \subset \dots \subset S_{|\mathcal{D}_{f, c}|}$. 
    Since there are $n$ actions from which the agent may choose and each demanded set must be a superset of some previous set, there are at most $n+1$ sets in demand.
\end{proof}

\newcommand{\Emax}{E_{max}}

\section{Graph Supermodular Multi-Agent Contract Problem}
In this section, we turn our focus to the Supermodular Multi-Agent Contracts Problem.
We begin by showing a strong impossibility result that rules out the existence of any polynomial time algorithm with bounded multiplicative approximations.
In addition, our impossibility result also rules out an additive FPTAS.
Our hardness result also holds even for uniform cost graph-supermodular contract problems. 

\subsection*{Hardness of Supermodular Multi-Agent Contract Problem}
Contrary to the submodular multi-agent contract setting \cite{duetting2022multi}, it is NP-hard to achieve a multiplicative factor approximation efficiently in the supermodular multi-agent contract problem even for instances in which agents have identical costs.
To prove this result, we reduce the decision version of $k$-clique (\kcd) (which is known to be NP-Hard) to the uniform cost graph supermodular multi-agent contract problem (\ugsc).

\begin{definition}[$k$-Clique Decision (\kcd)]
    $k$-Clique Decision is defined on an undirected graph $G=(V,E)$ and a positive integer $k$.
    The goal is to distinguish if the graph $G$ contains a $k$-clique, which is a subgraph of $k$ nodes with an edge between each pair of nodes.
\end{definition}
The U-GSC problem is a specific instance of the supermodular multi-agent contract problem in which the reward function is graph supermodular and costs are uniform for every agent.
Notably, when the reward function $f$ is the graph supermodular function defined on graph $G=(V,E)$, the marginal utility of agent $i\in V$ to the set $S\subseteq V$ is proportional to the degree of $i$ to $S$: $f(i\mid S) = \deg_{S}(i)/\emax$.

\begin{definition}[Uniform Cost Graph Supermodular Contract Problem (\ugsc)]
    The input to U-GSC is an undirected graph $G=(V,E)$ and a cost $c > 0$.
    The goal is to find a set $S\subseteq V$ maximizing the principal's utility for the reward function $f(S) := |E(S)|/\emax$ and cost $c$:
    \[
        \mu_p(S) = \left( 1 - \sum_{i\in S}\frac{c\cdot\emax}{\deg_{S}(i)} \right) \left( \frac{|E(S)|}{\emax} \right)
    \]
\end{definition}

Our reduction relies heavily on the properties of the sum of inverse degrees term (summation term).
An optimal solution $S^*\in\argmax_{S\subseteq V}\mu_p(S)$ to U-GSC will have low sum of inverse degrees, which implies $S^*$ will only contain nodes of high degree $\deg_{S^*}(i)$.
With a carefully selected cost $c$, we can ensure that only a $k$-clique (and its supersets) will contain nodes with the requisite large degree.

\begin{theorem}[Hardness of \ugsc]\label{thm:multi_agent_supermodular_hardness}
    The U-GSC problem admits no efficient multiplicative approximation algorithm nor an additive FPTAS unless $P=NP$.
\end{theorem}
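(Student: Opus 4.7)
The plan is a log-space reduction from \kcd \ to \ugsc \ that exhibits an $\Omega(1/n^2)$ versus $0$ gap in the principal's optimal utility. Since $S=\emptyset$ always achieves $\mu_p = 0$, such a gap simultaneously rules out any multiplicative approximation and any additive FPTAS, since the latter with accuracy $\epsilon = 1/(2n^2)$ would also distinguish the two cases in polynomial time.

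Given an instance $(G=(V,E),k)$ of \kcd \ with $|V|=n$, I would output the \ugsc \ instance on the same graph $G$ with a cost $c$ chosen so that $c\emax$ lies strictly inside the interval $\left(\frac{k-2}{k-1},\frac{k-1}{k}\right)$; this interval is nonempty because $k(k-2)<(k-1)^2$, and a convenient choice is $c\emax = \frac{k-1}{k} - \frac{1}{2k(k-1)}$. In the YES case, any $k$-clique $C$ has all induced degrees equal to $k-1$, so $\mu_p(C) = \left(1 - c\emax \cdot \tfrac{k}{k-1}\right)\tfrac{\binom{k}{2}}{\emax}$ simplifies to $\tfrac{k}{4(k-1)\emax} = \Omega(1/n^2)$.

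The heart of the argument is the NO case: I will show $\mu_p(S) \leq 0$ uniformly for every $S\subseteq V$ whenever $G$ is $K_k$-free. Any $S$ with an isolated vertex in $G[S]$ has $\mu_p(S)=-\infty$, so assume $\deg_S(i)\geq 1$ for all $i \in S$. By AM-HM, $\sum_{i\in S}\tfrac{1}{\deg_S(i)}\geq \tfrac{|S|^2}{2|E(S)|}$, which yields $\mu_p(S) \leq \tfrac{|E(S)|}{\emax} - \tfrac{c|S|^2}{2}$. Since $G[S]$ is also $K_k$-free, Tur\'an's theorem bounds $|E(S)|\leq \tfrac{k-2}{2(k-1)}|S|^2$. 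Substituting,
\begin{equation*}
\mu_p(S) \leq \frac{|S|^2}{2\emax}\left(\frac{k-2}{k-1} - c\emax\right) = -\frac{|S|^2}{4k(k-1)\emax} \leq 0.
\end{equation*}

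The main obstacle will be ensuring this NO-case bound holds uniformly over sets of \emph{every} size, not just those of size $k$: a priori, some larger non-clique $S$ might still be dense enough to make $\mu_p(S)$ positive. The linchpin is that Tur\'an's theorem applies to the induced subgraph on every $S$, giving a size-independent density ceiling which, combined with the AM-HM lower bound on the inverse-degree sum, is swamped by our chosen $c$. With this gap established, a multiplicative $\alpha$-approximation would output $\geq \Omega(1/(\alpha n^2))>0$ in YES instances and $0$ in NO instances, solving \kcd \ in polynomial time; an additive FPTAS with $\epsilon = 1/(2n^2)$ would do the same.
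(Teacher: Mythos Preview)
Your proposal is correct and follows essentially the same approach as the paper: reduce from \kcd, choose $c\emax$ near $\tfrac{k-2}{k-1}$, compute $\mu_p$ on a $k$-clique for the YES case, and in the NO case combine Tur\'an's bound with the AM--HM (equivalently, handshaking-plus-convexity) lower bound on $\sum_i 1/\deg_S(i)$ to force $\mu_p(S)\le 0$ for every $S$. The only cosmetic difference is that the paper takes $c\emax=\tfrac{k-2}{k-1}$ exactly (yielding $\mu_p\le 0$ in the NO case), whereas you pick a point strictly inside the interval (yielding $\mu_p<0$); both choices work.
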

\begin{proof}
    Given an instance of $k$-Clique Decision, $G=(V,E)$ and $k$, we use graph $G$ and cost $c=\frac{1}{\emax}\left(\frac{k-2}{k-1}\right)$ as input to the U-GSC problem.
    Trivially, this process takes polynomial time.
    
    In the following proofs we will show that the principal, in the constructed U-GSC instance, cannot receive utility more than $0$ if the \kcd \ instance is negative (i.e. does not contain a $k$-clique).
    For positive instances of \kcd, she can receive utility $\frac{k}{2\emax(k-1)} = \Omega\left(\frac{1}{n^2}\right)$.
    In \ugsc \ instances constructed by our reduction, a multiplicative approximation algorithm (FPTAS or PTAS) and an additive FPTAS for U-GSC will be able to find sets with positive utility.
    Thus, such approximation algorithms can distinguish between positive and negative instances of $k$-clique and do not exist unless $P=NP$.

    To start, we assume that the given instance of \kcd \ is positive and show that the principal can receive utility $\Omega \left(\frac{1}{n^2} \right)$ in the constructed instance of \ugsc.
    The graph $G$ contains a $k$-clique we denote as $K$.
    If the principal incentivizes the nodes $K$ in the constructed instance of \ugsc, then she receives utility
    \begin{align*}
        \mu_p(K)
        &= \left(1 - \left(\frac{k-2}{k-1}\right)\sum_{i\in K} \frac{1}{\deg_K(i)}\right) \left(\frac{|E(K)|}{\Emax}\right) \\
        &= \left(1 - \left(\frac{k-2}{k-1}\right)\left(\frac{k}{k-1}\right)\right) \left(\frac{k(k-1)}{2\Emax}\right) \\
        &= \frac{k}{2\Emax (k-1)}
    \end{align*}

    Next, we assume that the \kcd \ instance is negative and show that the principal cannot hope to get positive utility in the constructed instance of \ugsc.
    To show this, we leverage Turan's Graph Theorem \cite{aigner1995turan} to bound the number of edges in any subgraph absent of a $k$-clique.
    This theorem states that any subset of nodes $S$ from a graph without a $k$-clique satisfies the following:
    \begin{equation*}\label{eq:turans}
        |E(S)| \le \left(\frac{k-2}{k-1}\right) \left(\frac{|S|^2}{2}\right)
    \end{equation*}
    
    For a fixed set $S$, the sum of inverse degrees is minimized, and thus the principal's utility is maximized, when $\deg_S(i) = 2|E(S)|/|S|$\label{eq:handshake} by the handshaking lemma and convexity.
    With a bound on the number of edges in $S$ and the degree of each node in $S$, we bound the utility the principal can receive for any set of nodes $S$ absent of a $k$-clique.
    We omit the right term $|E(S)|/\Emax$ because it is inconsequential.
    \begin{align*}
        \mu_p(S)
        &\le \left(1 - \left(\frac{k-2}{k-1}\right)\sum_{i\in S} \frac{1}{\deg_S(i)}\right) \\
        &\le\left(1 - \left(\frac{k-2}{k-1}\right)\sum_{i\in S} \left(\frac{|S|}{2|E(S)|}\right)\right) 
        & (\text{handshaking lemma and convexity})\\
        &\le\left(1 - \left(\frac{k-2}{k-1}\right)\sum_{i\in S} \frac{(k-1)}{(k-2)|S|}\right)
        & (\text{Turan's Theorem}) \\
        &\le 0
    \end{align*}
    Thus, the principal cannot hope to receive positive utility for instances of \ugsc \ constructed from a negative instance of \kcd.

    We reiterate that any constant multiplicative approximation scheme for \ugsc \ can distinguish between positive and negative instances of \kcd.
    Additionally, an additive FPTAS finds a solution within $O\left(\frac{1}{n^3}\right)$ of the optimal in polynomial time, so it also distinguishes between positive and negative instances of \kcd \ efficiently.
    Thus, none of these approximation algorithms exist unless $P=NP$.        
\end{proof}

\subsection*{Multi-Agent Contracts and Dense Subgraphs}

We formalize a connection between \ugsc \ and Normalized Densest $k$-subgraph (\ndks) as formulated in Barman~\cite{barman2015approximating}: given an undirected graph $G=(V,E)$ and integer $k \in \mathbb N_{> 0}$, find $S\subseteq V$ of size $k$ that maximizes the objective $\frac{ |E(S)|}{|S|^2}$ \footnote{
    There are two notions of edge density that are commonly used: $|E(S)|/|S|^2$ and $|E(S)|/\binom{|S|}{2}$.
}.
Specifically, \ugsc \ subsumes a special case of \ndks \ we call ($\delta$, $\epsilon$)-linear sized almost clique (\lsac) in which we are promised that a graph $G$ either contains a $\delta n$-clique or no subgraph with size $\delta n$ has edge density ($\frac{|E(S)|}{|S|^2}$) more than $1 - \epsilon$.
The goal is to distinguish between these two cases.

\begin{proposition}(Informal)
    An additive PTAS for \ugsc \ implies a polytime algorithm for \lsac.
\end{proposition}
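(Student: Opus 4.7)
The plan is to adapt the \kcd\ reduction of Theorem~\ref{thm:multi_agent_supermodular_hardness} into a gap-preserving reduction from \lsac\ to \ugsc, with a cost parameter that yields a constant (rather than $\Theta(1/n^2)$) YES/NO gap in the principal's utility, so that an additive PTAS with constant error can detect it.

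Given an \lsac\ instance $(G, \delta, \epsilon)$, the reduction would set the \ugsc\ instance to $G$ with uniform cost $c = \alpha/\emax$ for a constant $\alpha \in (1 - \epsilon, 1)$ chosen in terms of $\epsilon$ (e.g.\ $\alpha = 1 - \epsilon/2$). Let $k := \delta n$. In the YES case ($G$ contains a $k$-clique $K$), a direct calculation using $\deg_K(i) = k-1$ and $|E(K)| = \binom{k}{2}$ yields
\[
    \mu_p(K) = \left(1 - \alpha \cdot \tfrac{k}{k-1}\right) \cdot \tfrac{k(k-1)}{2\emax} = (1-\alpha)\delta^2 + o(1),
\]
a positive constant that lower-bounds the \ugsc\ optimum.

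For the NO case, I would upper-bound $\mu_p(S)$ for every $S \subseteq V$ using convexity and the \lsac\ density promise. The AM-HM inequality gives $\sum_{i \in S} 1/\deg_S(i) \ge |S|^2/(2|E(S)|)$, so
\[
    \mu_p(S) \le \frac{|E(S)|}{\emax} - \frac{\alpha |S|^2}{2\emax}.
\]
Writing $|S| = \beta n$ and $d = |E(S)|/\binom{|S|}{2}$, this reads $\mu_p(S) \le (d - \alpha)\beta^2 + o(1)$. An averaging argument extends the $1-\epsilon$ density bound from size exactly $\delta n$ to every subgraph of size at least $\delta n$: if some $k' \ge k$ subgraph had density $d' > 1-\epsilon$, a uniformly random $k$-subset would in expectation have density $d' > 1-\epsilon$, contradicting the promise. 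Thus for $\beta \ge \delta$, $d \le 1-\epsilon$ and $\mu_p(S) \le (1-\epsilon-\alpha)\beta^2 + o(1) < 0$. For $\beta < \delta$, $d \le 1$ gives $\mu_p(S) \le (1-\alpha)\beta^2 + o(1)$, strictly below $(1-\alpha)\delta^2$. Combining cases, the NO-side optimum is bounded above by $(1-\alpha)\delta^2 - \gamma$ for a constant $\gamma = \Omega(\epsilon \delta^2)$. To decide \lsac, I would invoke the additive PTAS for \ugsc\ with error $\epsilon' = \gamma/3$ (constant, hence polynomial time), and return YES iff the principal's utility of the PTAS-returned set exceeds the midpoint threshold $(1-\alpha)\delta^2 - \gamma/2$.

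The main obstacle is tightly controlling $\mu_p(S)$ for subsets of size just below $\delta n$, where \lsac\ places no density cap: a $(\delta n - O(1))$-clique in the NO instance would have utility within $O(1/n)$ of the YES value. Closing this edge effect cleanly requires either a mildly strengthened \lsac\ promise (a size gap $\delta' < \delta$ with no $\delta' n$-near-clique, natural in the dense-subgraph literature) or a cost schedule that overpenalizes slightly smaller subsets, such as the $c = (k-2)/((k-1)\emax)$ cost of Theorem~\ref{thm:multi_agent_supermodular_hardness} combined with an \lsac\ density gap wide enough to compensate; in either route, the core density-to-utility translation above carries the reduction through.
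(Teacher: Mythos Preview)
Your reduction is essentially the paper's: same input graph, uniform cost $c=\alpha/\emax$ for a constant $\alpha$ near $1$ (the paper takes $\alpha=1-\epsilon$), the same AM--HM bound $\sum_{i\in S}1/\deg_S(i)\ge |S|^2/(2|E(S)|)$, and the same case split on $|S|\gtrless \delta n$.

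The ``main obstacle'' you flag is not one. A $(\delta n - O(1))$-clique cannot exist in a NO instance: pad it with $O(1)$ arbitrary vertices to size $\delta n$ and the resulting set has density $1-O(1/n)>1-\epsilon$, violating the promise. More generally, for \emph{any} $S$ with $|S|\le \delta n$ in a NO instance, embed $S$ into an arbitrary superset $S'$ of size exactly $\delta n$; then $|E(S)|\le |E(S')|\le (1-\epsilon)\binom{\delta n}{2}=(1-\epsilon)|E(K)|$, hence $f(S)\le (1-\epsilon)f(K)$. This is precisely the step the paper uses for the small-set case, and with $\alpha=1-\epsilon$ it gives $\mu_p(S)\le \epsilon f(S)\le \epsilon(1-\epsilon)f(K)$, leaving a clean $\Theta(\epsilon^2\delta^2)$ gap from the YES value $\epsilon f(K)$. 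No strengthened \lsac\ promise and no alternative cost schedule is needed; your averaging argument for sizes $\ge \delta n$ is also unnecessary since the \lsac\ definition already covers all sizes at least $k$. Once you add the embedding observation, your proof and the paper's coincide.
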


We provide a log-space reduction that implies an additive PTAS for \ugsc \ yields a poly-time algorithm for \lsac. 
Given a graph $G$, we set costs to be $c = \frac{(1-\epsilon)}{\emax}$.
The principal can receive utility $\Theta(\epsilon \delta^2)$ when $G$ contains a $\delta n$-clique.
In the case that every subgraph of size $\delta n$ has density no more than $(1-\epsilon)$, the principal cannot hope to receive utility more than $\Theta((1-\epsilon)\epsilon \delta^2)$.
Thus an additive PTAS for \ugsc \ with error set to $\Theta(\epsilon^2 \delta^2)$ can distinguish between negative and positive instances of \lsac \ in poly-time.\footnote{We note that Arora et al.~\cite{arora1995polynomial} present an additive PTAS for \ndks \ for linear sized $k$ that solves \lsac.}
For completeness, we leave additional details of this reduction in Appendix~\ref{sec:connection to DkS}.

\section{Additive PTAS for the U-GSC Problem} \label{sec:PTAS}

We turn our focus to obtaining an additive PTAS for the uniform cost graph supermodular contracts problem (\ugsc).
Recall the principal's objective in \ugsc:
\[
    \mu_p(S)
    = \left( 1 - \sum_{i \in S} \frac{c}{\deg_{S}(i)} \right) \left(\frac{|E(S)|}{\emax}\right)
\]
For convenience, we omit the $\emax$ term in each agent's marginal utility by implicitly factoring out a $\frac{1}{\emax}$ term from the cost (the reparameterized cost is $c\in[0, 1)$).
We denote the left term of the objective as $L(S) := 1 - \sum_{i \in S}\frac{c}{\deg_{S}(i)}$,\footnote{We occasionally refer to the summation $\sum_{i \in S} \frac{c}{\deg_{S}(i)}$ as the \textit{sum of inverse degrees}.} the right term of the objective as $R(S) := \frac{|E(S)|}{\emax}$, and let $S^*\in\argmax_{S\subseteq V}\mu_p(v)$ denote an optimal solution.
We notice that both $L(S), R(S)\leq 1$ for all $S\subseteq V$.
The goal of our additive PTAS for U-GSC is to find a set $S$ that obtains $\mu_p(S) \geq \mu_p(S^*) - O(\epsilon)$ in time $n^{O(1/\epsilon)}$ for $\epsilon \in (0, \frac{1}{7}]$.
We assume that the graph $G$ contains $n \ge \frac{2^2\cdot 3^4}{\epsilon^8}$ nodes.
Otherwise, we can find an optimal contract in time $n^{O(1/\epsilon)}$ by enumerating all subsets of nodes.
When $\mu_p(S^*) < \epsilon$, the empty set is within $\epsilon$ additive error of the optimal solution: $\mu(\emptyset) > \mu_p(S^*) - \epsilon$.
Hence, throughout this section, we focus on the case when $\mu_p(S^*) \ge \epsilon$.
In this nontrivial case, it immediately follows that $S^*$ contains a linear portion of the nodes in $V$.
\begin{restatable}{proposition}{linsizeopt}\label{prop:linear-size-opt-set}
    If optimal set $S^*$ satisfies $\mu_p(S^*) \ge \epsilon$, then $|S^*| \ge \epsilon \cdot n$. \label{linearS} 
\end{restatable}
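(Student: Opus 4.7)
The plan is to upper bound $\mu_p(S^*)$ by a quantity depending only on $|S^*|$ and $n$, and then invoke the hypothesis $\mu_p(S^*) \geq \epsilon$ to force $|S^*|$ to be linear in $n$. The first key observation is the trivial bound $L(S) \leq 1$ for every $S \subseteq V$, which holds because each term $c/\deg_S(i)$ in the definition of $L(S)$ is nonnegative. Combining this with the elementary edge count $|E(S^*)| \leq \binom{|S^*|}{2}$ yields
\[
    \mu_p(S^*) \;=\; L(S^*)\cdot R(S^*) \;\leq\; R(S^*) \;=\; \frac{|E(S^*)|}{\binom{n}{2}} \;\leq\; \frac{|S^*|(|S^*|-1)}{n(n-1)}.
\]

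Next, I would proceed by contrapositive: assume $|S^*| < \epsilon n$ and derive $\mu_p(S^*) < \epsilon$. The boundary case $|S^*|=0$ is immediate since $\mu_p(\emptyset)=0$. For $|S^*|\geq 1$, the monotonicity of $x(x-1)$ on $[1,\infty)$ gives $|S^*|(|S^*|-1) < \epsilon n(\epsilon n - 1)$, so
\[
    \mu_p(S^*) \;<\; \frac{\epsilon n(\epsilon n - 1)}{n(n-1)} \;=\; \epsilon \cdot \frac{\epsilon n - 1}{n - 1}.
\]
A short rearrangement shows $\frac{\epsilon n - 1}{n - 1} < \epsilon$ is equivalent to $\epsilon < 1$, which is satisfied since $\epsilon \leq 1/7$. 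Therefore $\mu_p(S^*) < \epsilon^2 < \epsilon$, contradicting the hypothesis and establishing $|S^*| \geq \epsilon n$.

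There is no meaningful technical obstacle here: the argument exploits only that $L$ is bounded above by $1$ and that the induced subgraph $G[S^*]$ has at most $\binom{|S^*|}{2}$ edges. The only care needed is the vacuous case $|S^*|=0$ and the routine check that the inequality $\frac{\epsilon n - 1}{n - 1} < \epsilon$ holds on the permissible range of $\epsilon$.
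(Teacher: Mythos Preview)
Your proof is correct and follows essentially the same approach as the paper: bound $\mu_p(S^*)\le R(S^*)$ using $L(S^*)\le 1$, then use the trivial edge bound $|E(S^*)|\le\binom{|S^*|}{2}$ together with $|S^*|<\epsilon n$ to conclude $\mu_p(S^*)<\epsilon$. If anything, your version is slightly more careful with the strict-versus-nonstrict inequalities and the edge case $|S^*|=0$ than the paper's own write-up.
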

We leave the proof of Proposition~\ref{prop:linear-size-opt-set} in the Appendix~\ref{sec:PTAS}.
Proposition~\ref{prop:linear-size-opt-set} implies that when the size of the optimal contract satisfies $|S^*| \leq \epsilon \cdot n$, we can simply output the empty set which will be within an $\epsilon$ additive factor of $\mu_p(S^*)$.
Hence, from now on, we assume $|S^*| \geq \epsilon \cdot n$.
\subsection*{\ndks \ Fails for \ugsc}
In the last section, we observed that \ugsc \ is intimately related to the $\ndks$ problem when $k$ is linear. An additive PTAS for linear \ndks \ has been shown by Arora et al.~\cite{arora1995polynomial}.
By Proposition \ref{prop:linear-size-opt-set}, we know that $S^*$ is of linear size.
This begs the question: why doesn't the ``obvious'' PTAS that simply tries the densest $k$ subgraph for every value of $k$ work?
Indeed, we know that $S^*$ must be dense to increase $R$, and dense sets tend to have many nodes of high degree (increasing $L$), so it is reasonable to think that $S^*$ may be well approximated by a densest $k$ subgraph for some $k = \Omega(n)$.
We show \ndks \ does not immediately imply an additive PTAS for \ugsc \ and we present an example (details in Appendix~\ref{sec:DkSfails}) in which for any $k$, there is a constant gap in utility between the optimal solution to \ndks \ and the optimal solution to \ugsc.

\begin{restatable}{example}{exampleoffail}\label{ex:dks_fails}
    Consider an undirected graph with $n$ nodes (where $n$ is a multiple of $12$).
    The graph is disjoint with two connected components $D, S\subseteq V$ that are each composed of half the nodes.
    In the subgraph $D = H \cup P$, $\frac{1}{6}n$ nodes (called $H$) form a $\frac{1}{6}n$-clique (have an edge to all the other $\frac{1}{6}n$ nodes).
    The other $\frac{1}{3}n$ nodes (called $P$) in $D$ have an edge to every node in the $\frac{1}{6}n$-clique.
    The subgraph $S$ is a complete bipartite subgraph with $2$ parts of $\frac{1}{4}n$ nodes each.
    When costs are set to $c = \frac{1}{4}$ in this example, the principal's optimal utility is at least $\mu_p(S^*) \geq \mu_p(S) \geq \frac{1}{16}$ and every densest $k$ subgraph $K$ yields utility less than $\frac{3}{50}$.
\end{restatable}

\vspace{.2 cm}
Example \ref{ex:dks_fails} exploits the fact that \ndks \ only tries to optimize the number of edges selected, with no regard to the degree distribution of the selected nodes.
In our constructed instance, the subgraph $K$ with the most edges always contains a ``small'' clique (for every $k$), thereby never achieving ``roughly even'' degree distribution for large enough $k$. 
Thus, for larger values of $k$, $L(K)$ is always smaller than $L(S^*)$ by an absolute constant, yielding a constant gap in utility from the optimal solution.\footnote{When $|K|$ is much smaller than $|S^*|$ (small values of $k$), $L(K) > L(S^*)$, but $K$ is simply too small to contain a comparable number of edges.}
Thus, a solver for \ndks \ does not immediately imply an additive PTAS for \ugsc.
Moreover, we note that the techniques for an additive PTAS in \cite{arora1995polynomial,barman2015approximating,feige1997densest} do not appear to generalize to our problem.
These techniques rely on polynomial program relaxations, which cannot express the sum of inverse degrees term in the \ugsc \ objective.

\subsection{Building Blocks of Our Approach} \label{sec:building block for PTAS}
An advantage of a linear sized optimal set is that when a node $v\in V$ is sampled uniformly at random it has constant probability of being in the optimal set ($\Pr[v\in S^*] \ge \epsilon$).    
Hence, if $O(\log n)$ nodes are sampled uniformly at random, then all the nodes lie in the set $S^*$ with probability at least $\epsilon ^{O(\log n)}$.
If the aforementioned sampling procedure is repeated $\poly(n)$ times, then for at least one of the runs, all the $O(\log n)$ nodes lie in the set $S^*$ with probability at least $1 - \frac{1}{\poly (n)}$.
This process, which we refer to as the $O(\log n)$-sampler in Lemma~\ref{lem:sampler}, efficiently samples $O(\log n)$ nodes uniformly from the set $S^*$.
A similar sampler was deployed to construct a PTAS for finding an approximate equilibrium of small probability games by Daskalakis and Papadimitriou \cite{daskalakis2011oblivious}.
The proof of Lemma~\ref{lem:sampler} is similar to that of Lemma 1 by Daskalakis et al. \cite{daskalakis2011oblivious}, but for completeness we prove the lemma in the Appendix.

\begin{restatable}{lemma}{efficientsampler}\label{lem:sampler}
    Let $K\subseteq V$ be a subset of nodes with $|K|\geq \frac{1}{3}\epsilon \cdot n$.
    Let $M_1,\dots, M_t$ for $t =  n^{\Theta\left(\frac{\log \frac{1}{\epsilon}}{\epsilon^4} \right)}$ be multisets of size $\Theta(\log n)$ sampled uniformly and at random from $V$ with replacement.
    With probability at least $1 - \frac{1}{\poly(n)}$, there exists a set $M_i$ for some $i\in [t]$ such that $M_i$ is a uniformly random multiset of $K$.
\end{restatable}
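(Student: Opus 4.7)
The plan is a two-step argument: first analyze a single multiset $M_i$ to extract both a lower bound on $\Pr[M_i \subseteq K]$ and a conditional-uniformity statement; then amplify by independence across the $t$ trials. For the single-trial step, each of the $s = \Theta(\log n)$ coordinates of $M_i$ is drawn independently and uniformly from $V$, and since $|K|/|V| \geq \epsilon/3$ each coordinate lies in $K$ with probability at least $\epsilon/3$. By independence,
\[
    \Pr\bigl[M_i \subseteq K\bigr] \;\geq\; (\epsilon/3)^{s}.
\]
The key observation is that conditioning a uniform draw on $V$ to lie in $K$ yields a uniform draw on $K$; by independence of the coordinates, conditionally on the event $\{M_i \subseteq K\}$ the multiset $M_i$ is exactly a uniformly random multiset of size $s$ from $K$. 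This conditional-uniformity step is the conceptual core of the lemma.

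Second, I would amplify by independence across the $t$ multisets. Since $M_1,\dots,M_t$ are drawn independently, the probability that none is entirely contained in $K$ is at most
\[
    \bigl(1 - (\epsilon/3)^{s}\bigr)^{t} \;\leq\; \exp\bigl(-t \cdot (\epsilon/3)^{s}\bigr).
\]
Writing $s = \alpha \log n$ (where $\alpha$ is the implicit constant in the ``$\Theta(\log n)$'' sample size, allowed to depend on $\epsilon$) gives $(\epsilon/3)^{s} = n^{-\alpha \log(3/\epsilon)}$. Choosing $\alpha = \Theta(1/\epsilon^{4})$ makes $(\epsilon/3)^{s} = n^{-\Theta(\log(1/\epsilon)/\epsilon^{4})}$, so picking $t = n^{\Theta(\log(1/\epsilon)/\epsilon^{4})}$ with a sufficiently large implicit constant ensures $t \cdot (\epsilon/3)^{s} = \Omega(\log n)$, and thus the failure probability is at most $1/\poly(n)$. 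On the complementary event, at least one index $i$ satisfies $M_i \subseteq K$; by the first step, such an $M_i$ is distributed as a uniformly random multiset of $K$, as required.

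The only real obstacle is bookkeeping: the exponent $\log(1/\epsilon)/\epsilon^{4}$ in the claimed value of $t$ forces the implicit constant hidden in the sample size ``$\Theta(\log n)$'' to be of order $1/\epsilon^{4}$, and one must be careful to thread this $\epsilon$-dependence through both the single-trial bound $(\epsilon/3)^{s}$ and the amplification bound $\exp(-t(\epsilon/3)^{s})$ so the two exponents match. Conceptually the argument is just ``rare event plus independent repetition plus conditional uniformity,'' exactly in the spirit of Lemma~1 of Daskalakis et al.~\cite{daskalakis2011oblivious}, which the authors invoke as a direct template.
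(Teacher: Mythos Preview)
Your proposal is correct and follows essentially the same approach as the paper's own proof: bound the single-trial probability $\Pr[M_i\subseteq K]\ge(\epsilon/3)^s$ with $s=\Theta(\epsilon^{-4}\log n)$, amplify across $t=n^{\Theta(\log(1/\epsilon)/\epsilon^4)}$ independent trials via $(1-p)^t\le e^{-pt}$, and verify conditional uniformity coordinate-by-coordinate. The only cosmetic difference is ordering (the paper defers the uniformity calculation to the end) and that the paper plugs in explicit constants ($s=\tfrac{3^6}{\epsilon^4}\ln n$, $t=n^{(3^7/\epsilon^4)\log(3/\epsilon)}$) rather than tracking $\Theta$'s.
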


The key advantage of the $O(\log n)$-sampler is that for a high degree node $v\in V$ with $\deg_{S^*}(v) = \Omega (n)$, the $O(\log n)$-sampler provides an accurate estimate of $\deg_{S^*}(v)$ in polynomial time.
In the next lemma, we show that for any set $K$ of size $\Omega(n)$, we can estimate $\deg_K(v)$ for all nodes $v\in V$ with $\deg_K(v) \geq \Omega(n)$. The proof of the lemma is in the appendix.  

\begin{restatable}{lemma}{degreeestimates}\label{lem:key-sampling}
    Let  $\beta = \frac{1}{3}(1-\epsilon)\epsilon$ and $K\subseteq V$ be any subset of nodes with $|K|\geq \frac{1}{3}\epsilon \cdot n$.
    There is a $n^{\Theta\left( \frac{\log \frac{1}{\epsilon}}{\epsilon^4}\right)}$-time algorithm unaware of the nodes $K$ that computes $t = n^{\Theta\left( \frac{\log \frac{1}{\epsilon}}{\epsilon^4}\right)}$ different degree estimates $\est^1_K(v),\dots, \est^t_K(v)$ for each node $v\in V$ such that with probability at least $1-\frac{1}{\poly(n)}$ that satisfies   there exists an $i\in[t]$.
       
    \begin{enumerate}
        \item  $\Pr\left[ \widehat{\deg}^i_K(v) < \beta n \right]\geq 1 - \frac{1}{2 n^3}$ for all $v\in V$ with $\deg_{K}(v) < \frac{\epsilon \cdot \beta }{9}n$
        
        \item $\Pr\left[  (1-\epsilon) \cdot \deg_K(v) \leq \widehat{\deg}^i_K(v) \leq  (1+\epsilon) \cdot \deg_K(v) \right]\geq 1 - \frac{1}{2  n^3}$ for all $v\in V$ with $\deg_{K}(v) \geq \frac{\epsilon\cdot \beta}{9}n$
    \end{enumerate}
\end{restatable}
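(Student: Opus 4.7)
The plan is to feed the $O(\log n)$-sampler from Lemma~\ref{lem:sampler} with target set $K$, then convert each resulting multiset into a family of degree estimates via standard Chernoff/Hoeffding concentration. I would invoke Lemma~\ref{lem:sampler} with multiset size $m = \Theta(\log n / \epsilon^4)$, which guarantees that with probability at least $1 - 1/\poly(n)$ there is some index $i^\star$ for which $M_{i^\star}$ is a uniformly random multiset drawn with replacement from $K$. Since the algorithm is not told $|K|$, I would enlarge the family of estimates by iterating over all $n$ possible guesses $k \in [n]$, defining
\[
\widehat{\deg}^{j,k}_K(v) \;=\; k \cdot \frac{|\mathcal{N}(v) \cap M_j|}{m}
\]
for each multiset $M_j$, guess $k$, and vertex $v$. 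Re-indexing over (sample, guess) pairs yields a family of $n \cdot t = n^{\Theta(\log(1/\epsilon)/\epsilon^4)}$ estimates per node, matching the claimed bound.

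Now fix the good sample $M_{i^\star}$ together with the correct guess $k = |K|$. Conditioned on $M_{i^\star}$ being uniform over $K$, the indicators $\mathbf{1}[u \in \mathcal{N}(v)]$ for $u \in M_{i^\star}$ are i.i.d.\ Bernoulli with parameter $p(v) = \deg_K(v)/|K|$, so $\hat{p}(v) := |\mathcal{N}(v) \cap M_{i^\star}|/m$ is an unbiased estimator of $p(v)$ and $\widehat{\deg}^{i^\star,|K|}_K(v) = |K|\,\hat{p}(v)$. In the low-degree regime $\deg_K(v) < \epsilon\beta n/9$, the assumption $|K| \geq \epsilon n/3$ forces $p(v) < \beta/3$, while the target conclusion $\widehat{\deg}^{i^\star,|K|}_K(v) < \beta n$ only requires $\hat{p}(v) < \beta n / |K|$, leaving a constant-multiplicative gap of at least $2\beta/3$ above $p(v)$. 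Hoeffding's inequality then bounds the failure probability by $\exp(-\Omega(m\beta^2))$, which drops below $1/(2n^3)$ once $m = \Omega(\log n / \beta^2)$. In the high-degree regime $\deg_K(v) \geq \epsilon\beta n/9$, we have $p(v) \geq \epsilon\beta/9$, and the multiplicative Chernoff bound yields failure probability $\leq 2\exp(-m p(v) \epsilon^2 / 3) \leq 2\exp(-\Omega(m \epsilon^3 \beta))$, which is below $1/(2n^3)$ once $m = \Omega(\log n / (\epsilon^3 \beta))$. Substituting $\beta = \Theta(\epsilon)$, both requirements are simultaneously met by $m = \Theta(\log n / \epsilon^4)$.

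The main technical obstacle is threading a single sample size through two qualitatively different regimes: the low-degree bound requires an additive concentration argument driven by the constant gap between $\beta/3$ and $\beta$, whereas the high-degree bound requires a genuinely multiplicative $(1\pm\epsilon)$ guarantee and therefore couples $m$ to the smallest relevant Bernoulli parameter $\Theta(\epsilon^2)$. A secondary bookkeeping issue is that the estimator scales linearly in the unknown $|K|$, so only the correct guess recovers the true degree; enumerating all $n$ guesses is the cheap fix. Once these pieces are combined, selecting the pair $(i^\star, |K|)$ gives a single index whose estimates satisfy both stated per-vertex bounds, and the overall runtime is dominated by computing $|\mathcal{N}(v) \cap M_j|$ for each (sample, guess, vertex) triple, which fits in $n^{\Theta(\log(1/\epsilon)/\epsilon^4)}$ time as claimed.
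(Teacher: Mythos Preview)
Your proposal is correct and follows essentially the same approach as the paper: invoke the sampler of Lemma~\ref{lem:sampler}, scale the neighbor count in each multiset to obtain a degree estimate, and split into low- and high-degree regimes handled respectively by an additive (Hoeffding-type) and a multiplicative (Chernoff) concentration bound, both of which dictate $m = \Theta(\log n/\epsilon^4)$. Your explicit enumeration over the unknown $|K|$ is a clean handling of a detail the paper treats more casually (it simply writes $|K|/m$ as the scaling factor in the estimator without spelling out how the algorithm learns $|K|$).
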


Suppose for a moment that all the nodes in $v\in S^*$ have $\deg_{S^*}(v) = \Omega(n)$.
Lemma~\ref{lem:key-sampling} implies that with probability at least $1 - \frac{1}{\poly(n)}$, for all $v\in S^*$, $\est_K(v)$ is within a $(1 \pm \epsilon)$-multiplicative factor of $\deg_K(v)$.
Hence, with probability at least $1 - \frac{1}{\poly(n)}$ the sum of degree estimates do not effect the sum of inverse degrees too much, 
\begin{align*}
    \sum_{v\in S^*} \frac{c}{\est_{S^*} (v)}
    \leq \frac{1}{(1-\epsilon)}\cdot \sum_{v\in S^*} \frac{c}{\deg_{S^*} (v)}
    \leq \sum_{v\in S^*} \frac{c}{\deg_{S^*} (v)} + 2 \epsilon,
\end{align*}
where the last inequality holds because $\sum_{v\in S^*} \frac{c}{\deg_{S^*} (v)} \leq 1$ and $\epsilon <\frac 1 2$.
This implies that $L(S^*)$ is approximated within an $O(\epsilon)$ additive factor.
However, Lemma~\ref{lem:key-sampling} does not ensure the estimates $\est_{S^*}(v)$ within $(1  \pm \epsilon)$-factor of $\deg_{S^*}(v)$ when $\deg_{S^*}(v)\leq \frac{\epsilon \cdot \beta }{9}n$. 
Thus, we cannot bound the sum of inverse degrees when we sample from $S^*$.

To overcome this issue, we pose a question.
Is there an approximately optimal set $\tilde S \subseteq S^*$ such that every node in $\tilde S$ has linear degree, $\tilde S$ contains a linear number of nodes, and $\tilde S$ is approximately optimal (i.e. $\deg_{\tilde S}(v) = \Omega(n)$ for all $ v\in V$, $|\tilde S| \ge \frac{1}{3}\epsilon n$, and $\mu_p(\tilde S) \ge \mu_p(S^*) - \epsilon$)?
If such a set $\tilde S$ exists, then by \ref{lem:key-sampling} we can compute degree estimates that do not substantially affect the sum of inverse degrees: $\sum_{v\in \tilde S} \frac{c}{\est_{\tilde S} (v)}\leq  \sum_{v\in \tilde S} \frac{c}{\deg_{\tilde S} (v)} + 2 \epsilon.$

A natural approach to construct such a set $\tilde S$ is to begin with $\tilde S = S^*$ and iteratively remove nodes with small $\deg_{\tilde S}(v)$ until only nodes of high degree remain (similarly to the classic coring algorithm).     
However, as the low-degree nodes are removed, the degree of the rest of the nodes (with potentially $\Omega(n)$ degree in $S^*$) can decrease significantly, possibly increasing the sum of inverse degrees substantially.
We overcome this obstacle by analyzing a relaxation of our objective for every iteration of the classic coring algorithm.
In the process, we show the existence of a set $\tilde S\subseteq S$ with the desired properties.

\subsection*{Existence of Approximately Optimal $\tilde S \subseteq S^*$ with Large Degrees}

The key idea is to show that the $O(\epsilon)\cdot n$-core of $S^*$ is non-empty and approximately optimal.
The $k$-core of a graph is the (unique) maximal subgraph such that every node has degree at least $k$.
One finds the $k$-core of a graph by iteratively removing the minimum degree node as long as there is a node with degree less than $k$.

Next, we show that the $\epsilon n/3$-core of the optimal set $S^*$ contains a linear number of nodes and provides near optimal utility.
\begin{lemma}\label{lem:solid-core}
    Let $S^*$ be the optimal contract with $|S^*|\geq \epsilon n$.
    There exists a subset $\tilde S \subseteq S^*$ such that $\deg_{\tilde S} (v) \geq \frac{\epsilon n}{3}$ for all $v\in \tilde S$, $|\tilde S| \ge \frac{1}{3}\epsilon n$, and $\mu_p(\tilde S)\geq \mu_p(S^*) - O(\epsilon)$. 
\end{lemma}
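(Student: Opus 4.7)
The plan is to construct $\tilde S$ as the $k$-core of $S^*$ with $k := \lceil \epsilon n/3 \rceil$: start from $S_0 := S^*$, and while some $v \in S_i$ satisfies $\deg_{S_i}(v) < k$, pick any such $v_i$ and set $S_{i+1} := S_i \setminus \{v_i\}$; terminate with $\tilde S := S_T$. The two conclusions $\tilde S \subseteq S^*$ and $\deg_{\tilde S}(v) \ge \epsilon n/3$ are immediate, so the real work lies in proving the near-optimality $\mu_p(\tilde S) \ge \mu_p(S^*) - O(\epsilon)$ and the size bound $|\tilde S| \ge \epsilon n/3$. The near-optimality is the main obstacle, because $\mu_p$ is non-monotone along the coring sequence: deleting a low-degree $v_i$ strips a large term $c/\deg_{S_i}(v_i)$ out of the sum of inverse degrees, but simultaneously shrinks the degrees of $v_i$'s neighbors and inflates the remaining terms.

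To tame this non-monotonicity, I introduce the ``clipped'' relaxation
\[
\bar L(S) := 1 - c\sum_{v \in S} \frac{1}{\max(\deg_S(v), k)}, \qquad \bar\mu(S) := \bar L(S) \cdot R(S),
\]
which enjoys two properties: (i) $\max(\deg_S(v), k) \ge \deg_S(v)$ gives $\bar L(S) \ge L(S)$ and hence $\bar\mu(S) \ge \mu_p(S)$ whenever $R(S) \ge 0$ (in particular at $S^*$); and (ii) every node of $\tilde S$ has degree at least $k$, so the clip is inactive on $\tilde S$ and $\bar\mu(\tilde S) = \mu_p(\tilde S)$. The heart of the argument is a per-iteration monotonicity: when $v_i$ is removed, its own clipped contribution $c/k$ is deleted from the sum (the clip forces the max to $k$ since $\deg_{S_i}(v_i) < k$). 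Each neighbor $u \in N(v_i) \cap S_{i+1}$ sees its degree drop by one, but a short case split on whether $\deg_{S_i}(u)$ lies below, at, or above $k$ shows that its clipped contribution increases by at most $c/(k(k+1))$ (the clip absorbs the drop whenever $\deg_{S_i}(u) \le k$, and the inverse difference is bounded by $1/(k(k+1))$ otherwise). Since $|N(v_i) \cap S_{i+1}| = \deg_{S_i}(v_i) < k$, the total inflation is strictly less than $c/(k+1)$, giving $\bar L(S_{i+1}) - \bar L(S_i) > c/k - c/(k+1) > 0$; i.e., $\bar L$ strictly increases per coring step. Combined with $R(S_{i+1}) - R(S_i) = -\deg_{S_i}(v_i)/\emax$ and the trivial bound $\bar L \le 1$, we obtain
\[
\bar\mu(S_{i+1}) - \bar\mu(S_i) \;\ge\; -\bar L(S_i) \cdot \frac{\deg_{S_i}(v_i)}{\emax} \;\ge\; -\frac{\deg_{S_i}(v_i)}{\emax}.
\]

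Telescoping and using $\sum_i \deg_{S_i}(v_i) = |E(S^*)| - |E(\tilde S)| \le n \cdot k = O(\epsilon n^2)$ yields $\bar\mu(\tilde S) \ge \bar\mu(S^*) - O(\epsilon)$, and properties (i), (ii) then give $\mu_p(\tilde S) = \bar\mu(\tilde S) \ge \bar\mu(S^*) - O(\epsilon) \ge \mu_p(S^*) - O(\epsilon)$, settling the near-optimality. For the size bound $|\tilde S| \ge \epsilon n/3$, I would argue by contradiction: if $|\tilde S| < \epsilon n/3$ then $R(\tilde S) \le \binom{|\tilde S|}{2}/\emax = O(\epsilon^2)$, hence $\mu_p(\tilde S) \le R(\tilde S) = O(\epsilon^2)$; combined with the near-optimality just proved and the standing hypothesis $\mu_p(S^*) \ge \epsilon$, this forces $\Omega(\epsilon) \le \mu_p(\tilde S) \le O(\epsilon^2)$, a contradiction for the parameter regime in this section ($\epsilon \le 1/7$ and $n$ sufficiently large in $1/\epsilon$). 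Therefore $|\tilde S| \ge \epsilon n/3$, completing the proof.
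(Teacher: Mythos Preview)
Your proof is correct and follows the same overall plan as the paper: take $\tilde S$ to be the $\lceil \epsilon n/3\rceil$-core of $S^*$, show via a relaxation of $L$ that the left factor does not drop along the coring sequence, and bound the total edge loss by $O(\epsilon n^2)$. The technical choices differ in minor but pleasant ways. The paper uses the relaxation $\mathcal L(S)=1-\sum_{v\in S}c/(\deg_S(v)+1)$, proves it increases when the \emph{minimum}-degree vertex is deleted, and then needs an extra step showing $\mathcal L(\tilde S)\le L(\tilde S)+\epsilon$ from $\deg_{\tilde S}(v)=\Omega(n)$; your clipped relaxation $\bar L(S)=1-\sum_{v\in S}c/\max(\deg_S(v),k)$ is monotone under deletion of \emph{any} vertex of degree below $k$ and satisfies $\bar L(\tilde S)=L(\tilde S)$ exactly, so that approximation step disappears. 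The paper bounds $L$ and $R$ separately and multiplies at the end, whereas you telescope on the product $\bar\mu=\bar L\cdot R$; both yield the same $O(\epsilon)$ loss. Finally, for the size bound the paper argues directly that $|E(\tilde S)|\ge |E(S^*)|-\epsilon n^2/3>0$, so the core is nonempty and hence has at least $k+1$ vertices; your contradiction via $R(\tilde S)=O(\epsilon^2)$ is sound but less direct, and you should be explicit that your telescoping gives a constant strictly below $1$ in front of $\epsilon$ (in fact roughly $2/3$), since that is what makes $\mu_p(\tilde S)\ge \epsilon - O(\epsilon)$ genuinely positive.
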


For convenience, we denote the $\epsilon n/3$-core of $S^*$ as $\tilde{S}$.
Trivially, $\tilde S$ will only contain nodes with degree at least $\frac{\epsilon n}{3}$.
We use a counting argument to bound the number of nodes and edges in $\tilde S$.
Lastly, we bound $L(S) - L(\tilde{S})$ by analyzing the change in a relaxed version of $L$ each iteration of the coring algorithm.
Before finishing the proof of this lemma, we introduce several other lemmas formalizing the above ideas.

\begin{lemma}\label{lem:core_right}
    Let $S^*$ be an optimal contract with $|S^*| \ge \epsilon n$.
    Let $\tilde S$ be the $\epsilon n/3$-core of the optimal set $S^*$.
    Then, $|E(S^*)| - |E(\tilde S)| \le \frac{1}{3}\epsilon n^2$.
    This implies that $|\tilde{S}| \ge \epsilon n/3$ and $R(S) - R(\tilde S) \le \epsilon$ for graphs with $n \geq 3$ nodes.
\end{lemma}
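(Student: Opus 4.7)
The plan is to analyze the standard coring procedure directly: starting from $S^*$, iteratively peel off any vertex whose current degree is strictly less than $\epsilon n/3$ until none remains; the resulting induced subgraph is exactly the $\epsilon n/3$-core $\tilde S$. The main tool is a simple double-counting bound on the edges lost during this process, from which both the edge bound and the size bound follow.

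First I would bound the edges removed per iteration: when a vertex $v$ with $\deg_{S_i}(v) < \epsilon n/3$ is deleted from the current set $S_i$, exactly $\deg_{S_i}(v) < \epsilon n/3$ edges vanish. Since each iteration eliminates a distinct vertex and $|S^*| \le n$, the process terminates in at most $n$ steps, yielding $|E(S^*)| - |E(\tilde S)| < n\cdot \epsilon n/3 = \epsilon n^2/3$, which establishes the edge bound. Dividing by $\emax = n(n-1)/2$ then gives $R(S^*) - R(\tilde S) \le 2\epsilon n/(3(n-1))$, which is at most $\epsilon$ precisely when $n \ge 3$ (equivalently $2n \le 3(n-1)$), matching the hypothesis of the lemma. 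For the cardinality claim I would leverage the standing nontrivial assumption $\mu_p(S^*) \ge \epsilon$: since $L(S^*)\le 1$ this forces $R(S^*) \ge \epsilon$, hence $|E(S^*)|\ge\epsilon\,\emax$. Combining with the edge bound,
\[
|E(\tilde S)| \;\ge\; \frac{\epsilon n(n-1)}{2} - \frac{\epsilon n^2}{3} \;=\; \frac{\epsilon n(n-3)}{6},
\]
which is strictly positive under the running assumption $n \ge 2^2\cdot 3^4/\epsilon^8$. Thus $\tilde S$ is non-empty, and by the defining property of the $\epsilon n/3$-core every $v\in\tilde S$ satisfies $\deg_{\tilde S}(v)\ge \epsilon n/3$, forcing $|\tilde S| \ge \epsilon n/3 + 1$.

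The technical content here is minimal and the argument is a short counting exercise; the only subtlety is bookkeeping. One must invoke $\mu_p(S^*) \ge \epsilon$ (rather than merely $|S^*|\ge \epsilon n$) to rule out $\tilde S = \emptyset$, since otherwise the claim $|\tilde S| \ge \epsilon n/3$ could fail vacuously; and the clean absolute bound $R(S^*) - R(\tilde S) \le \epsilon$ requires exactly the hypothesis $n \ge 3$ quoted in the statement, so the lemma is in a sense tight in its parameter regime.
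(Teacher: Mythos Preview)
Your proposal is correct and follows essentially the same argument as the paper: bound the edges lost in the coring process by (number of iterations)$\times$(max degree removed) $\le n\cdot \epsilon n/3$, divide by $\emax$ to get the bound on $R$, and then invoke the standing assumption $\mu_p(S^*)\ge\epsilon$ to force $|E(\tilde S)|>0$ and hence $|\tilde S|\ge \epsilon n/3$. Your observation that the hypothesis $|S^*|\ge\epsilon n$ alone is insufficient and that one really needs $\mu_p(S^*)\ge\epsilon$ matches exactly what the paper does.
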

\begin{proof}
    Recall that the discrete coring algorithm iteratively removes nodes that do not meet the $\frac{1}{3}\epsilon n$ degree requirement, implying that at most $\frac{1}{3}\epsilon n$ edges are removed each iteration.
    Since at most $n$ nodes are removed, the coring algorithm removes at most $\frac{1}{3}\epsilon n^2$ edges: $|E(S^*)| - |E(\tilde S)| \le \frac{1}{3}\epsilon n^2$.

    To prove that $\tilde S$ is non-empty, we use the previous fact and the property $\frac{1}{3} \epsilon n^2 < \frac{\epsilon n (n-1)}{2}$ for $n > 3$.
    In the nontrivial case when $\mu_p(S^*) \geq \epsilon$, it must be that $|E(S^*)| \ge \frac{\epsilon n (n-1)}{2}$.
    Thus, $|E(\tilde S)| \ge |E(S^*)| - \frac{1}{3} \epsilon n^2 \ge \frac{\epsilon n (n-1)}{2} - \frac{1}{3} \epsilon n^2 > 0$.
    This implies that $\tilde S$ is non-empty and, since $\tilde S$ is a $\frac{1}{3} \epsilon n$-core, it must contain at least $\frac{1}{3} \epsilon n$ nodes.

    Lastly, it holds that $R(S^*) - R(\tilde S) \le \frac{\frac{1}{3}\epsilon n^2}{\emax} \le \epsilon$ for graphs of $n \geq 3$ nodes.
\end{proof}

\begin{lemma}\label{lem:core_left}
    Let $S^*$ be an optimal contract with $|S^*| \ge \epsilon n$.
    Let $\tilde S$ be the $\epsilon n/3$- core of the optimal set $S^*$.
    It holds that $L(S^*) - L(\tilde{S}) \leq \epsilon$.
\end{lemma}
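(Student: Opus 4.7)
The plan is to construct a relaxation $\hat L$ of $L$ that (i) dominates $L$ everywhere, (ii) coincides with $L$ on the core $\tilde S$, and (iii) is monotone non-decreasing along the coring algorithm. Chaining these three facts gives $L(\tilde S) = \hat L(\tilde S) \ge \hat L(S^*) \ge L(S^*)$, which in fact is slightly stronger than the stated bound.

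Concretely, let $k := \lceil \epsilon n / 3 \rceil$ and define
\[
\hat L(S) := 1 - \sum_{i \in S} \frac{c}{\max(\deg_S(i),\, k)}.
\]
Property (i) is immediate: replacing $\deg_S(i)$ in the denominator by $\max(\deg_S(i), k)$ can only shrink each summand, so $\hat L(S) \ge L(S)$. Property (ii) holds because every node of the $k$-core satisfies $\deg_{\tilde S}(i) \ge k$, so the $\max$ is inactive and $\hat L(\tilde S) = L(\tilde S)$.

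The key step is monotonicity along the coring sequence $S^* = S_0 \supset S_1 \supset \cdots \supset S_T = \tilde S$, where at step $t$ we remove some $v_t \in S_t$ with $\deg_{S_t}(v_t) < k$. I would decompose $\hat L(S_{t+1}) - \hat L(S_t)$ into (a) a \emph{gain} from deleting $v_t$'s summand, equal to $c / \max(\deg_{S_t}(v_t), k) = c/k$, and (b) a \emph{loss} summed over the neighbors $u$ of $v_t$ in $S_t$, whose degrees each drop by one. A three-case split on whether $\deg_{S_t}(u) \le k$, $= k+1$, or $\ge k+2$ shows that the per-neighbor loss is $0$, $c/(k(k+1))$, or at most $c/((k+1)(k+2))$ respectively; in particular it is bounded by $c/(k(k+1))$ in all cases. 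Since $v_t$ has at most $k - 1$ neighbors in $S_t$, the total loss is at most $(k-1)\cdot c/(k(k+1)) < c/k$, which is strictly less than the gain. Hence $\hat L$ is non-decreasing at every coring step.

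The main obstacle is identifying a relaxation that simultaneously satisfies (i)--(iii); $L$ itself can oscillate sharply along coring, because removing a low-degree node $v_t$ forces a $c/\deg \to c/(\deg-1)$ jump for each neighbor, which can outweigh the gain from deleting $v_t$'s own summand. Flooring the denominator at $k$ neutralizes precisely the worst-case transition (a neighbor of degree $k+1$ dropping to degree $k$) and balances it against the gain of $c/k$ from deleting $v_t$. Once this relaxation is in hand, combining monotonicity with (i) and (ii) yields $L(\tilde S) \ge L(S^*)$, which implies the claimed bound.
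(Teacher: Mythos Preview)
Your proposal is correct and in fact yields the stronger conclusion $L(\tilde S) \ge L(S^*)$. The paper uses a different relaxation, namely $\mathcal{L}(S) := 1 - \sum_{i\in S} c/(\deg_S(i)+1)$, which shifts every degree up by one rather than flooring at $k$. The paper's monotonicity step exploits that the removed node is the \emph{minimum}-degree vertex (so each neighbor's loss $c/(\deg_S(i)(\deg_S(i)+1))$ is at most $c/(\deg_S(v)(\deg_S(v)+1))$, and there are $\deg_S(v)$ neighbors), whereas your argument only uses $\deg_{S_t}(v_t) < k$. The trade-off is that $\mathcal{L}$ does \emph{not} coincide with $L$ on the core, so the paper needs an additional step bounding $\mathcal{L}(\tilde S) - L(\tilde S)$; this is where the standing assumption $n \ge 2^2\cdot 3^4/\epsilon^8$ enters, and it only gives $L(S^*) - L(\tilde S) \le \epsilon^6/36$ rather than $\le 0$. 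Your choice of flooring at $k$ sidesteps this entirely because the floor is inactive on the core, buying an exact equality $\hat L(\tilde S) = L(\tilde S)$ and hence a cleaner, assumption-free chain.
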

\begin{proof}
    We define a relaxation for the left half of the objective: $\mathcal{L}(S) := 1 - \sum_{i\in S} \frac{c}{\deg_{S}(i) + 1}$.
    Clearly, for all sets $S\subseteq V$, $\mathcal{L}(S) \ge L(S)$.
    Consider the change in $\mathcal{L}$ for one iteration of the coring algorithm in which a node $v \in S$ is removed from the set of remaining nodes $S$.
    This node satisfies $\deg_{S}(v) \le \deg_{S}(i)$ for all $i\in S$.
    \begin{align*}
        \mathcal{L}(S\setminus \{v\}) - \mathcal{L}(S)
        &= \left(1 - \sum_{i\in S\setminus\{v\}}\frac{c}{\deg_{S\setminus\{v\}}(i) + 1}\right) - \left(1 - \sum_{i\in S}\frac{c}{\deg_{S}(i) + 1}\right) \\
        &= \frac{c}{\deg_{S}(v) + 1} - \sum_{i\in S\cap\mathcal{N}(v)} \frac{c}{\deg_{S}(i) (\deg_{S}(i) + 1)} \\
        &\geq \frac{c}{\deg_{S}(v) + 1}\left(1 - \sum_{i\in S\cap\mathcal{N}(v)}\frac{1}{\deg_{S}(v)}\right)
        &\left(\deg_{S}(v) \le \deg_{S}(i), \forall i\in S\right) \\
        &= 0
        &\left(\sum_{i\in S\cap \mathcal{N}(v)} 1 = \deg_{S}(v)\right)
    \end{align*}

    Next, we bound the gap between the left objective and its relaxed variant for the $\epsilon n/3$-core.
    \begin{align*}
        L(\tilde S) - \mathcal{L}(\tilde S)
        &= \left(1 - \sum_{i\in S}\frac{c}{\deg_{S}(i)}\right) - \left(1 - \sum_{i\in S}\frac{c}{\deg_{S}(i) + 1}\right) \\
        &= c\sum_{i\in S}\left(\frac{1}{\deg_{S}(i) + 1} - \frac{1}{\deg_{S}(i)}\right) \\
        &\ge cn \left( \frac{1}{\frac{1}{3}\epsilon n + 1} - \frac{1}{\frac{1}{3}\epsilon n} \right)
        &(\deg_{S}(i) \ge \frac{1}{3}\epsilon n, \forall i\in S) \\
        &= -\frac{cn}{(\frac{1}{3}\epsilon n + 1)(\frac{1}{3}\epsilon n)} 
        \ge -\frac{9c}{\epsilon^2 n}
        \ge -\frac{\epsilon^6}{36}
        &(c \le 1\text{ and }n\ge \frac{2^2\cdot 3^4}{\epsilon^8})
    \end{align*}

    The above inequalities imply that
    \begin{align*}
        L(S^*) \le \mathcal{L}(S) \le \mathcal{L}(\tilde{S}) \le L(\tilde{S}) + \epsilon
    \end{align*}
\end{proof}

\begin{proof}[Completed Proof of Lemma~\ref{lem:solid-core}]
    We know that $L(S^*) - L(\tilde{S}) \le \epsilon$ and $R(S^*) - R(\tilde{S}) \le \epsilon$ by lemmas \ref{lem:core_left} and \ref{lem:core_right}, respectively.
    With the above, we bound the difference in utility for $S^*$ and $\tilde{S}$    
    \begin{align*}
        \mu_p(S^*) - \mu_p(\tilde S)
        = L(S^*)\cdot R(S^*) - L(\tilde S)\cdot R(\tilde S)  
        \leq  L(S^*)\cdot R(S^*) - (L(S^*) - \epsilon)\left( R(S^*) - \epsilon \right)  \leq  2\epsilon
    \end{align*}
    Lastly, Lemma~\ref{lem:core_right} shows that $\tilde{S}$ contains the required $\epsilon n/3$ nodes.
\end{proof}

\subsection{The Main Algorithm and Additive PTAS}

In Section~\ref{sec:building block for PTAS}, we established the following properties of the optimal contract:
\begin{enumerate}
    \item \label{property1} In Lemma~\ref{lem:solid-core}, we proved an existence of $\tilde S \subseteq S^*$ with the properties $|\tilde S| \geq \frac{\epsilon}{3} \cdot n$, $\forall v\in \tilde S$, $\deg_{\tilde S}(v)\geq \frac{\epsilon}{3} \cdot n$, and $\mu_p(\tilde S) \geq \mu_p(S^*) - O(\epsilon)$.
    \item \label{property2}By Lemma~\ref{lem:key-sampling}, we can compute degree estimates $\est_{\tilde{S}}(v)$ satisfying $(1-\epsilon)\deg_{\tilde S}(v) \le \est_{\tilde{S}}(v) \le (1+\epsilon)\deg_{\tilde S}(v)$ for all $v\in \tilde{S}$ with probability at least $1 - \frac{1}{\poly(n)}$ without knowledge of $\tilde{S}$.\footnote{
        Lemma~\ref{lem:key-sampling} produces many degree estimates for each node.
        With high probability at least one estimate will be accurate for every node.
        We assume W.L.O.G. that we can identify an accurate degree estimate if one exists, as we can simply enumerate over all estimates.
    } 
\end{enumerate}
Property~\ref{property1} shows the existence of $\tilde S$ which is almost optimal such that every node $v\in\tilde S$ has high degree $\deg_{\tilde S}(v) = \Omega(n)$.
Instead of searching for the contract that approximates $\mu_p(S^*)$, we aim to find a contract that approximates $\mu_p(\tilde S)$.
Since $\tilde S$ has a linear number of nodes and only contains nodes of linear degree ($|\tilde S| = \Omega(n)$ and $\deg_{\tilde S}(v) = \Omega(n), \forall v\in \tilde S$), property~\ref{property2} allows us to ``accurately" estimate the degrees $\deg_{\tilde S}(v)$ for all nodes $v\in \tilde S$, even without any description of $\tilde S$. 
    
The first step of our PTAS is to estimate $\deg_{\tilde S}(v)$ for all $v\in V$.
We then filter out the nodes with with low estimated degree.
By property~\ref{property2}, this filtering process retains all nodes in $\tilde{S}$ in addition to other nodes with accurate degree estimates.    

To formalize the above discussion, given the degree estimates of all nodes, we consider the set of high degree nodes denoted to $\tilde S$ as $H$ such that
\begin{equation*}
    H = \left\{ v\in V: \est_{\tilde S}(v) \geq \frac{(1-\epsilon)\epsilon \cdot n}{3} \right\}
\end{equation*}
We let $\beta = \frac{(1-\epsilon)\cdot \epsilon}{3}$.
Now, Lemma~\ref{lem:key-sampling} implies that with probability at least $1 - \frac{1}{\poly  (n)}$, $H$ does not contain any node $v\in V$ with $\deg_{\tilde S}(v)\leq \frac{\epsilon \cdot \beta}{9} n$.
Hence, for the rest of this section, we condition on the following event whose probability is at least $1 - \frac{1}{\poly(n)}$:
\begin{equation*}
    \mathtt{GoodSample}
    = \{ \tilde S \subseteq H\}
    \cap \left \{ \forall v \in H: \deg_{\tilde S}(v) > \frac{\epsilon\cdot\beta }{9} n \right\}
    \cap \left\{ \forall v\in H: (1-\epsilon)\leq \frac{\est_{\tilde S}(v)}{\deg_{\tilde S}(v)} \leq (1 + \epsilon)  \right\}
\end{equation*}

\begin{observation}\label{obsn:high-prob-event}
    $\Pr[\mathtt{GoodSample}] \geq 1 - \frac{1}{\poly(n)}$.
\end{observation}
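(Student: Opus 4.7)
The plan is to invoke Lemma~\ref{lem:key-sampling} with $K = \tilde S$ (valid since $|\tilde S|\geq \epsilon n/3$ by Lemma~\ref{lem:solid-core}) and then union-bound the per-node guarantees over all $n$ vertices to simultaneously verify the three events that comprise $\mathtt{GoodSample}$.

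First, I would condition on the outer high-probability event of Lemma~\ref{lem:key-sampling}: with probability at least $1-1/\poly(n)$ there is an index $i^{*}\in[t]$ whose estimates $\est_{\tilde S}^{i^{*}}(v)$ obey both concentration guarantees per node --- namely, nodes with degree below $\epsilon \beta n/9$ are estimated below $\beta n$, and nodes with degree at least $\epsilon\beta n/9$ are estimated within a multiplicative $(1\pm \epsilon)$ factor --- each with per-node failure probability at most $1/(2n^{3})$. A union bound across the $n$ vertices ensures that all per-node concentration guarantees simultaneously hold with probability at least $1-1/n^{2}$.

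Conditioning on this combined good event, I would verify the three constituent events in turn. For $\tilde S\subseteq H$, every $v\in \tilde S$ lies in the $\epsilon n/3$-core, so $\deg_{\tilde S}(v)\geq \epsilon n/3 \geq \epsilon\beta n/9$, and case 2 of Lemma~\ref{lem:key-sampling} gives $\est_{\tilde S}(v)\geq (1-\epsilon)\cdot \epsilon n/3 = \beta n$, placing $v\in H$. For the degree lower bound on $H$, the contrapositive of case 1 says that any $v$ with $\deg_{\tilde S}(v) < \epsilon \beta n/9$ has $\est_{\tilde S}(v) < \beta n$ and hence $v \notin H$; so every $v\in H$ must satisfy $\deg_{\tilde S}(v)\geq \epsilon\beta n/9$. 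The same lower bound places every $v\in H$ squarely into the regime of case 2, which then directly yields the $(1\pm\epsilon)$ multiplicative bound on $\est_{\tilde S}(v)/\deg_{\tilde S}(v)$.

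There is no real obstacle beyond careful bookkeeping: the outer $1/\poly(n)$ failure probability from the sampler adds to the union-bounded $n\cdot 1/n^{3}$ failure probability over node-level concentrations, and the total remains $1/\poly(n)$. All the genuine technical work was already carried out in Lemmas~\ref{lem:solid-core} and~\ref{lem:key-sampling}; this observation simply assembles them.
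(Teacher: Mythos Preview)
Your proposal is correct and mirrors the paper's own proof: both invoke Lemma~\ref{lem:key-sampling} with $K=\tilde S$ (justified by Lemma~\ref{lem:solid-core}), use case~1 to exclude low-degree nodes from $H$, use case~2 to get the $(1\pm\epsilon)$ ratio and to place every $v\in\tilde S$ into $H$, and combine the per-node failure probabilities into a single $1/\poly(n)$ bound. You are simply more explicit about the union bound than the paper's two-sentence argument.
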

\begin{proof}
    By definition, $H$ only contains nodes of high approximate degree to $\tilde S$ ($\est_{\tilde{S}}(v) \ge \beta n$).
    Nodes $v$ with low true degree $\deg_{\tilde{S}}(v) < \frac{\epsilon \cdot \beta}{9} n$ do not have high approximate degree by Sampling Lemma~\ref{lem:key-sampling}, so they are not included in $H$.
    Also by Sampling Lemma~\ref{lem:key-sampling}, all other nodes in $H$ must have accurate degree (within a $(1\pm\epsilon)$ factor).
    Finally, all nodes in $\tilde S$ must be $H$ as their true degree is high.
    Lemma~\ref{lem:key-sampling} states that all of the mentioned events occur with probability $1 - \frac{1}{\poly(n)}$.
\end{proof}

The rest of our algorithm focuses on the nodes in $H$ to reconstruct $\tilde{S}$ because $\tilde S \subseteq H$ when $\mathtt{GoodSample}$ occurs.
We note that the remainder of our proofs condition on the event $\mathtt{GoodSample}$.

To recover $\tilde S$ from the set $H$ we set up a tractable optimization problem.
Preferably, we want to find a set $S \subseteq H$ that minimizes $\sum_{v\in S} \frac{c}{\deg_S(v)}$ subject to $|E(S)| \ge |E(\tilde{S})|$.
The minimizer $S$ of this optimization problem will have high utility when $\mathtt{GoodSample}$ occurs: $\mu_p(S) \geq \mu_p(\tilde{S}) \geq \mu_p(S^*) - O(\epsilon)$.
Unfortunately, this optimization problem appears to be challenging to solve optimally, as it is not conducive to the quadratic programming or linear programming techniques of \cite{arora1995polynomial, barman2015approximating, feige1997densest}.
    
Therefore, we define a surrogate linear program that utilizes accurate estimates of $\deg_{\tilde S}(v)$ obtained in Lemma~\ref{lem:key-sampling}.  
We define weights $\{x\}_{v\in H}$, where $x_v$ is the indicator of $v \in \tilde S$.
For ease of notation, we denote $\hat d_v = \est_{\tilde S}(v)$. 
\begin{flalign}
    &  &\mathtt{LP}(H,E(\tilde S), \{\hat d_v\}_{v\in H}):  \min_{\{x_v\}_{v\in H}} \quad &\sum_{v\in H} \frac{c}{\hat d_v}  \cdot x_v \label{lp:1_objective}\\
    & & \text{s.t.} \quad & \sum_{v\in H}\hat{d}_v x_v \ge 2(1-\epsilon) \cdot |E(\tilde S)| \label{lp:1_sufficient_edges}\\
    & &      & \sum_{\substack{u\in H\\ (u,v)\in E}} x_u \geq \left(\frac{1}{1+\epsilon}\right) \cdot \hat d_v, &\forall v\in H \quad \label{lp:1_degrees_lowerbound}\\
    & &      & x_v\in[0,1]   &\forall v\in H \quad \nonumber
\end{flalign}

In this LP, we minimize the fractional sum of inverse estimated degrees.
The first constraint \eqref{lp:1_sufficient_edges} ensures that our fractional solution contains enough edges.
Finally, constraint \eqref{lp:1_degrees_lowerbound} ensures that the fractional degree of a node is greater its estimated degree with the allowance of some error.

An issue with the surrogate LP is that the algorithm is unaware of the set $\tilde S$, so we do not know the value of $|E(\tilde S)|$.
However, there are only $O(n^2)$ many values of $|E(\tilde S)|$ which can be efficiently enumerated, so we assume that $|E(\tilde S)|$ is known W.L.O.G.
Next, we obtain the following lemma that shows that the fractional solution to the above LP approximates the contract $\tilde S$ whenever $|E(\tilde S)|$ is given and $\mathtt{GoodSample}$ occurs.

\begin{observation}
    Let $\tilde{\vec{x}} = \mathbbm{1}_{\tilde S}$ be the indicator of $\tilde S$.
    Observe that $\tilde{\vec{x}}$ is a feasible solution to LP~\eqref{lp:1_objective}.
    Thus, the optimal solution $\vec{x}^*$ to LP~\eqref{lp:1_objective} has lower sum of inverse approximate degrees: $\sum_{v\in H}\frac{c}{\hat{d}_v}\tilde{\vec{x}} \le \sum_{v\in H}\frac{c}{\hat{d}_v}\vec{x}^*$.
\end{observation}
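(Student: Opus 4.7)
The plan is to verify the two assertions in the observation in turn: first, that the integral vector $\tilde{\vec x} = \mathbbm 1_{\tilde S}$ lies in the feasible region of $\mathtt{LP}(H, E(\tilde S), \{\hat d_v\}_{v \in H})$, and second, that the stated inequality then follows immediately because the LP is a minimization. Throughout, I condition on the event $\mathtt{GoodSample}$, under which $\tilde S \subseteq H$ and $(1-\epsilon)\deg_{\tilde S}(v) \le \hat d_v \le (1+\epsilon)\deg_{\tilde S}(v)$ for all $v \in H$.

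To verify feasibility, I need to check the three families of constraints. The box constraints $\tilde x_v \in [0,1]$ hold trivially since $\tilde{\vec x}$ is a $0/1$ indicator. For the degree constraint \eqref{lp:1_degrees_lowerbound}, fix any $v \in H$ and observe that
\begin{equation*}
    \sum_{\substack{u \in H \\ (u,v) \in E}} \tilde x_u
    \;=\; \bigl|\{u \in \tilde S : (u,v) \in E\}\bigr|
    \;=\; \deg_{\tilde S}(v),
\end{equation*}
where the first equality uses $\tilde S \subseteq H$. Under $\mathtt{GoodSample}$, $\hat d_v \le (1+\epsilon)\deg_{\tilde S}(v)$, so $\deg_{\tilde S}(v) \ge \hat d_v / (1+\epsilon)$, as required.

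For the edge-budget constraint \eqref{lp:1_sufficient_edges}, I apply the handshaking lemma to the induced subgraph on $\tilde S$ together with the lower tail of the sampling guarantee:
\begin{equation*}
    \sum_{v \in H} \hat d_v \, \tilde x_v
    \;=\; \sum_{v \in \tilde S} \hat d_v
    \;\ge\; (1-\epsilon) \sum_{v \in \tilde S} \deg_{\tilde S}(v)
    \;=\; 2(1-\epsilon) \cdot |E(\tilde S)|.
\end{equation*}
Hence $\tilde{\vec x}$ satisfies every constraint of the LP. Since the LP is a minimization, the optimum $\vec x^*$ achieves an objective value no larger than that of any feasible point, so $\sum_{v \in H} \tfrac{c}{\hat d_v} x^*_v \le \sum_{v \in H} \tfrac{c}{\hat d_v} \tilde x_v$, which is the claimed inequality (the two sides are written in the observation in the order comparing ``optimal'' and ``indicator,'' but the direction of minimization is what is being invoked).

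I do not expect any real obstacle here: the only subtlety is keeping track of which direction of the multiplicative estimate $(1\pm\epsilon)$ is used for each constraint --- the lower bound $(1-\epsilon)\deg_{\tilde S}(v) \le \hat d_v$ drives the edge-count constraint while the upper bound $\hat d_v \le (1+\epsilon)\deg_{\tilde S}(v)$ drives the per-node fractional degree constraint --- and ensuring that the ambient set $H$ (rather than $\tilde S$) in the summation $\sum_{u \in H, (u,v)\in E} \tilde x_u$ does not cause trouble, which it does not precisely because $\tilde x$ is supported on $\tilde S \subseteq H$.
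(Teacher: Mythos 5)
Your proof is correct and follows exactly the route the paper intends (and implicitly uses again inside the proof of Lemma~\ref{lem:round}): conditioned on $\mathtt{GoodSample}$, the indicator of $\tilde S$ satisfies the box constraints, the per-node constraint \eqref{lp:1_degrees_lowerbound} via $\hat d_v \le (1+\epsilon)\deg_{\tilde S}(v)$, and the edge constraint \eqref{lp:1_sufficient_edges} via $\hat d_v \ge (1-\epsilon)\deg_{\tilde S}(v)$ together with handshaking, so minimality of the LP gives the claim. You are also right that the displayed inequality in the observation has its sides swapped relative to the verbal statement that $\vec x^*$ has the \emph{lower} objective value; the intended (and correct) direction is $\sum_{v\in H}\frac{c}{\hat d_v}x^*_v \le \sum_{v\in H}\frac{c}{\hat d_v}\tilde x_v$, which is what you prove.
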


The last step of our PTAS is to employ a randomized rounding scheme to recover a set $S$ that is ``near optimal'' using the fractional solution $\vec{x}^*$.
This rounding scheme includes each node $v$ in the constructed set $S$ independently with probability $\vec{x}^*_v$.
We denote $S\sim \mathtt{Round}(\vec{x}^*)$ to mean that $S\subseteq H$ is the result of rounding fractional solution $\vec{x}^*$.
Intuitively, if $S\sim\mathtt{Round}(\vec{x}^*)$ then for any $v\in H$, we have $\mathbb E[\deg_{T}(v)] = \sum_{\substack{u\in H\\ (u,v)\in E}} x_u = \Omega(n)$ due to Constraint~\ref{lp:1_degrees_lowerbound},  and $\hat d_v \geq \beta \cdot n$ for all $v\in H$.
Therefore, with high probability $\deg_{T}(v)$ should concentrate around $\hat d_v$.

We essentially utilize this fact and obtain the following lemma that rounds the optimal solution of LP~\eqref{lp:1_objective} to an integral contract which is almost optimal.
Our rounding techniques are similar to derandomization techniques from \cite{arora1995polynomial}.
We present the proof in Appendix~\ref{alg:round}.

\begin{restatable}{lemma}{roundinglemma}\label{lem:round}
    Let $\vec x^*$ be the optimal solution to LP~\eqref{lp:1_objective} and $S \sim \mathtt{Round}(\vec{x}^*)$.
    Then with probability at least $1 - \frac{1}{\poly(n)}$, we have $\mu_p(S) \geq \mu_p(S^*) - 5 \epsilon$.
\end{restatable}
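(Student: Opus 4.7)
The plan is to condition on the event $\mathtt{GoodSample}$ throughout (costing only an additive $1/\poly(n)$ failure probability by Observation~\ref{obsn:high-prob-event}) and then separately control the two factors of the objective: the edge count $R(S) = |E(S)|/\emax$ and the sum of inverse degrees appearing in $L(S)$. The final bound will come from comparing each factor of $\mu_p(S)$ to the corresponding factor of $\mu_p(\tilde S)$, and then invoking Lemma~\ref{lem:solid-core} to pass from $\tilde S$ to $S^*$ at a cost of $O(\epsilon)$.

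\textbf{Controlling $R(S)$.} Since each $v \in H$ is included in $S$ independently with probability $x^*_v$, we have $\mathbb{E}[2|E(S)|] = \sum_{v\in H} x^*_v \sum_{u\in H:(u,v)\in E} x^*_u$. Using the LP degree constraint \eqref{lp:1_degrees_lowerbound} this is at least $\frac{1}{1+\epsilon}\sum_{v\in H}\hat d_v x^*_v$, which by the LP edge constraint \eqref{lp:1_sufficient_edges} is at least $\frac{2(1-\epsilon)}{1+\epsilon}|E(\tilde S)| \geq 2(1-3\epsilon)|E(\tilde S)|$. Since $|E(S)|$ is a sum of pairwise-negatively-correlated (or easily bounded by a martingale difference sum of) bounded random variables, a Chernoff/Hoeffding-type bound yields $|E(S)| \geq (1 - O(\epsilon))|E(\tilde S)|$ with probability $1 - 1/\poly(n)$, provided $|E(\tilde S)| = \Omega(\epsilon n^2)$ (which holds in the nontrivial case $\mu_p(S^*) \geq \epsilon$). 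Dividing by $\emax$ gives $R(S) \geq R(\tilde S) - O(\epsilon)$.

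\textbf{Controlling $L(S)$.} For each $v \in H$, the quantity $\deg_S(v) = \sum_{u\in H:(u,v)\in E}\mathbbm{1}[u\in S]$ is a sum of independent $\{0,1\}$ variables whose expectation is $\sum_{u\in H:(u,v)\in E} x^*_u \geq \hat d_v/(1+\epsilon) \geq \beta n/(1+\epsilon) = \Omega(n)$ by constraint \eqref{lp:1_degrees_lowerbound} and the definition of $H$. A multiplicative Chernoff bound therefore gives $\deg_S(v) \geq (1-\epsilon)\,\hat d_v/(1+\epsilon) \geq (1-3\epsilon)\hat d_v$ with probability $1 - 1/\poly(n)$ per vertex, and a union bound over the $\leq n$ vertices in $H$ keeps this failure probability at $1/\poly(n)$. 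Consequently,
\[
\sum_{v\in S}\frac{c}{\deg_S(v)} \;\leq\; \frac{1}{1-3\epsilon}\sum_{v\in S}\frac{c}{\hat d_v}.
\]
Next, $\sum_{v\in S} c/\hat d_v$ is a sum of independent bounded variables (each at most $c/(\beta n)$ and the total expectation is at most $1$), so Hoeffding concentrates it within $\epsilon$ of its expectation $\sum_{v\in H} x^*_v c/\hat d_v$ with probability $1-1/\poly(n)$. By LP optimality and feasibility of $\tilde{\vec{x}} = \mathbbm 1_{\tilde S}$, this expectation is at most $\sum_{v\in \tilde S} c/\hat d_v$, and under $\mathtt{GoodSample}$ it is at most $\frac{1}{1-\epsilon}\sum_{v\in \tilde S} c/\deg_{\tilde S}(v) = \frac{1-L(\tilde S)}{1-\epsilon}$. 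Assembling these estimates,
\[
1 - L(S) \;\leq\; \frac{1}{(1-3\epsilon)(1-\epsilon)}\bigl(1 - L(\tilde S)\bigr) + O(\epsilon) \;\leq\; (1 - L(\tilde S)) + O(\epsilon),
\]
so $L(S) \geq L(\tilde S) - O(\epsilon)$.

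\textbf{Combining.} With both $L(S)\geq L(\tilde S) - O(\epsilon)$ and $R(S)\geq R(\tilde S) - O(\epsilon)$, and both factors in $[0,1]$, we get $\mu_p(S) = L(S)R(S) \geq \mu_p(\tilde S) - O(\epsilon)$. Lemma~\ref{lem:solid-core} then yields $\mu_p(S) \geq \mu_p(S^*) - O(\epsilon)$, and a careful accounting of constants (together with the assumed $\epsilon \leq 1/7$) pushes the hidden constant down to $5\epsilon$. Taking a final union bound over the $O(1)$ bad events (failure of $\mathtt{GoodSample}$, of edge concentration, of per-vertex degree concentration, and of the sum-of-reciprocals concentration) keeps the total failure probability at $1/\poly(n)$. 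The main technical obstacle I anticipate is the degree concentration step: the objective is very sensitive to any vertex in $S$ whose realized degree drops far below its expectation, so the argument relies crucially on the fact that every $v\in H$ has $\hat d_v = \Omega(n)$ (ensured by the filtering that defines $H$) so that the Chernoff tail is small enough to survive a union bound over all vertices of $S$.
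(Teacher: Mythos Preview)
Your plan is correct and mirrors the paper's proof: condition on $\mathtt{GoodSample}$, apply per-vertex Chernoff to get $\deg_S(v)\ge(1-\epsilon)\hat d_v/(1+\epsilon)$, apply Hoeffding to concentrate $\sum_{v\in S} c/\hat d_v$ around the LP objective, compare to the feasible point $\mathbbm 1_{\tilde S}$ via LP optimality, and then combine with Lemma~\ref{lem:solid-core}. The only noteworthy difference is how you control $|E(S)|$: your claim that the edge indicators $\mathbbm 1[u\in S]\mathbbm 1[v\in S]$ are pairwise negatively correlated is false (edges sharing a vertex are positively correlated), so you would indeed need your martingale/McDiarmid hedge; the paper avoids this by instead concentrating the \emph{independent} sum $\sum_{v\in S}\hat d_v$ (their event $\mathtt{EDGE}$) and then using the already-established per-vertex degree events to pass from $\sum_{v\in S}\hat d_v$ to $\sum_{v\in S}\deg_S(v)=2|E(S)|$.
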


\subsection*{Putting it All Together: }
Next, we outline our PTAS for U-GSC in Algorithm~\ref{alg:PTAS_subprocess}.

\begin{theorem}\label{thm:PTAS}
    For any given constant $\epsilon \in \left(0 ,\frac{1}{7} \right]$ and instance of \ugsc \ $\langle G=(V,E) , c \rangle$, Algorithm~\ref{alg:PTAS_subprocess} outputs a contract $S$ such that with probability at least $1 - \frac{1}{\poly(n)}$ contract $S$ has high utility: $\mu_p(S) \geq \mu_p(S^*) - O(\epsilon)$.
\end{theorem}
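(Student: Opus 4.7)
The plan is to combine the building blocks established in Section~\ref{sec:PTAS} (Proposition~\ref{prop:linear-size-opt-set}, Lemmas~\ref{lem:key-sampling}, \ref{lem:solid-core}, and \ref{lem:round}) into a single argument that conditions on a high-probability ``good sample'' event and handles the two regimes of $\mu_p(S^*)$ separately. First, I would dispense with the easy case $\mu_p(S^*)<\epsilon$: the algorithm can include $\emptyset$ among its candidate outputs, and $\mu_p(\emptyset)=0\geq \mu_p(S^*)-\epsilon$ trivially. So the nontrivial case is $\mu_p(S^*)\geq \epsilon$, and in this case Proposition~\ref{prop:linear-size-opt-set} yields $|S^*|\geq \epsilon n$, so Lemma~\ref{lem:solid-core} guarantees the existence of a subset $\tilde S\subseteq S^*$ of size at least $\tfrac{\epsilon}{3}n$ with $\deg_{\tilde S}(v)\geq \tfrac{\epsilon}{3}n$ for all $v\in\tilde S$ and $\mu_p(\tilde S)\geq \mu_p(S^*)-O(\epsilon)$. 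The strategy is to approximate $\tilde S$ rather than $S^*$.

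Next I would instantiate Lemma~\ref{lem:key-sampling} with $K=\tilde S$, which produces (in time $n^{O(\log(1/\epsilon)/\epsilon^4)}$ and unconditionally on knowing $\tilde S$) a family of $t=n^{O(\log(1/\epsilon)/\epsilon^4)}$ candidate degree estimate vectors, such that with probability at least $1-1/\poly(n)$ at least one of them simultaneously (a) rejects every vertex with $\deg_{\tilde S}(v)<\tfrac{\epsilon\beta}{9}n$ and (b) is within a $(1\pm\epsilon)$ multiplicative factor on every vertex with $\deg_{\tilde S}(v)\geq \tfrac{\epsilon\beta}{9}n$. This is exactly the event $\mathtt{GoodSample}$ of Observation~\ref{obsn:high-prob-event}, and I would condition on it for the remainder of the proof. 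Since the algorithm does not know which sample index is good, it simply tries all of them and outputs the best contract it finds; this only costs a polynomial factor in runtime and does not decrease the success probability.

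Inside the good branch the algorithm forms $H=\{v:\est_{\tilde S}(v)\geq \beta n\}$ and needs to run $\mathtt{LP}(H,E(\tilde S),\{\hat d_v\})$; since $|E(\tilde S)|$ is unknown, I would enumerate all $O(n^2)$ possible values and solve the LP for each. The key point, already noted after the LP, is that the indicator of $\tilde S$ is a feasible solution to the LP for the correct guess of $|E(\tilde S)|$, so the LP optimum is no worse (in the objective~\eqref{lp:1_objective}) than $\sum_{v\in\tilde S}c/\hat d_v$, and by the $(1\pm\epsilon)$-accuracy of $\hat d_v$ on $\tilde S$ this is within an additive $O(\epsilon)$ of $\sum_{v\in\tilde S}c/\deg_{\tilde S}(v)=1-L(\tilde S)$. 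Now Lemma~\ref{lem:round} applied to the optimal fractional solution $\vec x^\ast$ yields a rounded set $S$ with $\mu_p(S)\geq \mu_p(S^*)-5\epsilon$ with probability $1-1/\poly(n)$. A union bound over the at most $\poly(n)$ many LP/rounding executions, together with the high-probability event $\mathtt{GoodSample}$, keeps the overall failure probability at $1/\poly(n)$. The algorithm then returns $\arg\max_S \mu_p(S)$ over all candidates, which dominates the successfully-rounded $S$.

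The main obstacle is not any single lemma but the bookkeeping: ensuring the conditioning chain (guess $|E(\tilde S)|$ correctly, use the correct sample index, have the rounding succeed) all goes through on a single enumerated branch, and that the total $n^{O(\log(1/\epsilon)/\epsilon^4)}$ branches preserve a $1-1/\poly(n)$ success probability via union bounds. A minor additional subtlety is verifying that the $O(\epsilon)$ slack from Lemma~\ref{lem:solid-core}, the $O(\epsilon)$ slack from replacing $\deg_{\tilde S}$ by $\hat d_v$ in the objective, and the $5\epsilon$ slack from Lemma~\ref{lem:round} combine cleanly (using $L,R\leq 1$) into a single $O(\epsilon)$ additive loss, which is exactly what the theorem claims.
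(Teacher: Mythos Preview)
Your proposal is correct and follows essentially the same approach as the paper's proof, which is a terse three-sentence argument invoking Observation~\ref{obsn:high-prob-event}, the enumeration over $|E(\tilde S)|$, and Lemma~\ref{lem:round}; you have simply spelled out the bookkeeping (the easy case $\mu_p(S^*)<\epsilon$, Proposition~\ref{prop:linear-size-opt-set}, Lemma~\ref{lem:solid-core}, and the enumeration over sample indices) that the paper leaves implicit. One small clarification: you do not need a union bound over all LP/rounding executions, since the algorithm takes the $\arg\max$ over candidates and hence only the single correct branch (correct sample index and correct guess of $|E(\tilde S)|$) needs to succeed; the $1-1/\poly(n)$ guarantee of Lemma~\ref{lem:round} on that one branch already subsumes $\mathtt{GoodSample}$ and suffices.
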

\begin{proof}
    By Observation \ref{obsn:high-prob-event}, we obtain a good sample that accurately estimates the degree of every node $v \in H$ to $\tilde S$ and has $\tilde S \subseteq H$ with high probability.
    Then, given a guess for $|E(\tilde S)|$ (by enumeration), the resulting LP will be feasible.
    Let $\vec{x}^*$ denote the optimal solution to the feasible LP.
    By Lemma \ref{lem:round}, $S$ sampled according to $\vec{x}^*$ ($S \sim \mathtt{Round}(\vec{x}^*)$) will satisfy $\mu_p(S) \geq \mu_p(S^*) - 5\epsilon$.
\end{proof}

\begin{algorithm}
\caption{Additive PTAS for Uniform Cost Graph Supermodular Contracts} \label{alg:PTAS_subprocess}
\textbf{Input:} An error parameter $\epsilon\in\left(0,\frac{1}{7}\right]$ and an undirected graph $G=(V,E)$ with $n\ge \frac{2^2\cdot 3^4}{\epsilon^8}$ nodes. \\
\textbf{Output:} A set $S$ satisfying $\mu_p(S^*) - \mu_p(\tilde S) \le 7\epsilon$ With probability $1 - \frac{1}{n}$.
\begin{algorithmic}
    \State $\texttt{Contracts} \gets \{\emptyset\}$
    \For{ $i=1,\dots, n^{\left(\frac{3^7}{\epsilon^4} \log\left(\frac{3}{\epsilon}\right)\right)}$ }      
        \State Let $M(i)$ be a multiset of $m=\frac{3^6}{\epsilon^4} \ln(n)$ nodes drawn uniformly at random from $V$ (with replacement).
        \State For each node $v \in V$, define the degree estimate $\hat{d}_v(i) \gets \min\left(\frac{|K|}{m}\deg_{M(i)}(v), n\right)$.
        \State Define the set of high degree nodes $H(i) \gets \left \{ v\in V \mid \hat{d}_v(i) \ge \frac{1}{3}(1-\epsilon)\epsilon n \right\}$.
        \For{Each possible value of $|E(\tilde S)|$: $k = \frac{\epsilon^2\cdot n^2}{18}, \dots, n^2$}
            \If{$\mathtt {LP}(H(i), k , \{\hat d_v(i)\}_{v\in H(i)})$ defined in (\ref{lp:1_objective}) is feasible}
                \State Let $\vec x^*(i,k)$ be the optimal solution to the $\mathtt {LP}(H(i), k , \{\hat d_v(i)\}_{v\in H(i)})$. 
                \State $S(i,k) \sim \mathtt{Round}(\vec x^*(i,k))$,
                \State $\texttt{Contracts} \gets \texttt{Contracts} \cup  S (i,k)$
            \EndIf
        \EndFor
    \EndFor
    \State $\hat S\gets \argmax_{S\in \mathtt{contracts}} \mu_p(S)$.
    \State \Return $\tilde S$
\end{algorithmic}
\end{algorithm}


\section{Open Problems and Future Directions}
We reiterate an important open question posed by \dutting \cite{dutting2022combinatorial} in the single agent setting: is there a polynomial time algorithm to find the optimal contract for arbitrary reward functions given access to a demand and value oracle?
There are two known classes of reward functions for which we can efficiently compute an optimal contract, both of which admit an efficient demand oracle.
These are gross substitutes as shown in \dutting \cite{dutting2022combinatorial}, and supermodular functions as shown in this work.
Both \cite{dutting2022combinatorial} and this work rely on there being a polynomial number of sets in demand.\footnote{Interestingly, the known families of set functions with a polynomial number of sets in demand all admit efficient demand oracles.}
However, there may be an exponential number of sets in demand in general so a new approach is required.

We leave open the existence of an additive PTAS for graph supermodular functions with arbitrary costs.
Recall that in the uniform cost case, low degree nodes are not necessary in an approximate solution and our PTAS discards such nodes entirely.
The arbitrary cost case poses additional difficulty because low degree nodes are still desirable to the principal if they have low cost.
Using our PTAS in the arbitrary cost setting may yield a poor solution if there are many low degree nodes with low cost.
Other approaches we considered and believe to be promising leverage regularity of large graphs.
For instance, Szemer\'edi's Regularity Lemma ensures a partition on nodes exhibiting random-like properties that appear useful for degree analysis in the multi-agent contract problem.

Our last open problem is to design an additive PTAS in the mutli-agent setting with arbitrary supermodular functions (that are normalized and monotone) with arbitrary costs.
Our analysis of the setting with graph supermodular rewards and uniform costs suggests this problem is related to constrained supermodular maximization.

\section{Acknowledgements}
We thank Aviad Rubinstein for helpful discussions and pointers to related $\dks$ literature. We also thank Shang-Hua Teng for helpful discussions.

\bibliographystyle{plain}
\bibliography{references}
\appendix
\section{DkS Fails for U-GSC} \label{sec:DkSfails}
Intuitively, a high utility set in the contract problem cannot contain a single node of low degree because the sum of inverse degrees will be too large (summation term).
Thus, the principal requires that every selected node has large degree.
In contrast, the $\dks$ problem is only concerned with density (the number of edges among selected nodes), and not individual degree.
To concretely show the difference of these conflicting goals, consider the following \ugsc \ instance for which the densest $k$-subgraph for any $k = 1,\dots, n$ does not provide a PTAS.

\exampleoffail*
\begin{proof}
    For ease of notation, write $\delta \in [0, 1]$ as the portion of nodes that must be selected (i.e. $k = \delta n$). 
    The output $K$ of the \dks \ solver can be partitioned into four cases. 
    When $\delta \leq 1/6$, $K$ will be a subset of the $n/6$-clique ($K \subseteq H$) achieving the maximum possible density of $1$. Thus, for $\delta \leq 1/6$, $\mu_p(K)$ is:
    \begin{align*}
        \left( 1 - \delta n \left( \frac{1/4}{\delta n} \right) \right) \left( \frac{2}{n^2} \left( \frac{(\delta n)^2}{2} \right) \right) = \frac{3 \delta ^2}{4} < 0.03  \qquad \text{for } \delta \leq 1/6  
    \end{align*}
    When $1/6 < \delta \leq 1/2$, it can be shown that the densest $k$ subgraph is all of $H$ and some nodes in $P$. Thus, for $1/6 < \delta \leq 1/2$, $\mu_p(K)$ is:
    \begin{align*}
        \left(1 - \frac{n}{6} \left( \frac{1/4}{\delta n} \right) - \left( \delta - \frac{1}{6} \right) n \left( \frac{1/4}{n/6} \right) \right) \left( \frac{2}{n^2} \left( \frac{(n/6)^2}{2} + \left( \delta - \frac{1}{6} \right) n \left( \frac{n}{6} \right) \right) \right) \\
        = \frac{-432 \delta^3 + 396 \delta^2 - 42 \delta + 1}{864 \delta} < 0.06 \qquad \text{for } 1/2 < \delta < 2/3
    \end{align*}
    When $1/2 < \delta < 2/3$, $K$ runs out of nodes to select from $H$ and $P$, so it begins adding nodes from $S$. Thus, for $1/2 < \delta < 2/3$, $\mu_p(K)$ is:
    \begin{align*}
        \left(1 - \frac{n}{6} \left( \frac{1/4}{n/2} \right) - \left( \frac{n}{3} \right) \left( \frac{1/4}{n/6} \right) - \left( \delta - \frac{1}{2} \right) n \left( \frac{1/4}{\frac{(\delta - 1/2)n}{2}} \right) \right) \left( \frac{2}{n^2} \left( \frac{(n/6)^2}{2} + \frac{n}{3} \left( \frac{n}{6} \right) + \left( \frac{ (\delta - 1/2) n}{2} \right)^2  \right) \right) \\
        = -\frac{36 \delta^2 - 36 \delta + 19}{864} < 0 \qquad \text{for } 1/2 < \delta < 2/3
    \end{align*}
    At $\delta = 2/3$, $K$ drops all nodes in $P$ and replaces them with the entirety of $S$ ($K = H \cup S$). So when $\delta \geq 2/3$, $K$ will select all of $H$ and $S$ and fill in the remaining nodes with nodes in $P$. For $2/3 < \delta \leq 1$, $\mu_p(K)$ is:
    \begin{align*}
        \left(1 - \frac{n}{6} \left( \frac{1/4}{(\delta - 1/2)n} \right) - \left( \delta - \frac{2}{3} \right) n \left( \frac{1/4}{n/6} \right) - \left( \frac{n}{2} \right) \left( \frac{1/4}{n/4} \right) \right) \left( \frac{2}{n^2} \left( \frac{(n/6)^2}{2} + \left( \delta - \frac{2}{3} \right) n \left( \frac{n}{6} \right) + \left( \frac{n/2}{2} \right)^2  \right) \right) \\
        = \frac{\left( -36 \delta^2 + 54 \delta - 19 \right) \left( 24 \delta - 5 \right)}{864 \left(2 \delta - 1 \right)} < 0.04
        \qquad \text{for } 2/3 \leq \delta \leq 1
    \end{align*}
    To reiterate, the principal's utility for incentivizing $K$ (for any $k$) is at most $\mu_p(K) < 0.06 < \frac{1}{16}$.
\end{proof}

\section{Multi-Agent U-GSC and Dense Graphs}\label{sec:connection to DkS}
We first define an \lsac \ instance which is a restricted version of verification problem from \cite{braverman2017eth}.
\begin{definition}[$(\delta,\epsilon)$-linear sized almost-clique (\lsac)]
    We are given an undirected graph $G=(V,E)$ of $n$ vertices and constants $\delta, \epsilon > 0$.
    The goal of \lsac \ is to distinguish between the following two cases when $k=\delta n$:
    \begin{enumerate}
        \item $G$ contains a $k$-clique.
        \item Every subgraph of $G$ with at least $k$ nodes has density at most $1-\epsilon$ $\left(\text{i.e., }\frac{|E(S)|}{|S|^2} \le 1-\epsilon\right)$.
    \end{enumerate}
\end{definition}
Notably, without the assumption that $k$ is linear in $n$, the above problem requires at least $n^{\tilde{\Omega}(\log n)}$ time to solve assuming the Exponential Time Hypothesis ($ETH$) \cite{braverman2017eth}.
We will show that our PTAS for U-GSC implies that \lsac \ admits a polytime algorithm. 

\begin{proposition}(Informal)
    An additive PTAS for \ugsc \ problem implies a polytime algorithm for \lsac \ (assuming $\epsilon$ is a constant).
\end{proposition}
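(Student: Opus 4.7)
The plan is to give a log-space reduction from \lsac\ to \ugsc\ that creates a utility gap wide enough for an additive PTAS to detect. Given an \lsac\ instance $(G,\delta,\epsilon)$ with $k=\delta n$, I use the same graph $G$ and set uniform cost $c=(1-\epsilon)/\emax$ in the \ugsc\ instance. The construction is clearly computable in logarithmic space, and keeping $\epsilon,\delta$ constant will ensure the final PTAS call runs in polynomial time.

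For completeness, if $G$ contains a $k$-clique $K$ (positive \lsac), then incentivizing $S=K$ gives principal utility $\mu_p(K)=\Theta(\epsilon\delta^2)$: substituting $\deg_K(i)=k-1$ and $|E(K)|=\binom{k}{2}$ into the \ugsc\ objective and simplifying yields $\mu_p(K)=k(\epsilon k-1)/(n(n-1))$, which is $\Theta(\epsilon\delta^2)$ for constant $\epsilon,\delta$.

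For soundness, I upper bound $\mu_p(S)$ for every $S\subseteq V$ in two regimes. When $|S|\ge k$, the \lsac\ density bound, the handshake lemma, and convexity of $1/x$ (analogous to the Tur\'an-based argument in Theorem~\ref{thm:multi_agent_supermodular_hardness}) together force $\sum_{i\in S}(1-\epsilon)/\deg_S(i)\ge 1$, so the left factor of $\mu_p$ is nonpositive and $\mu_p(S)\le 0$. When $|S|<k$, the same convexity step alone yields $\mu_p(S)\le \epsilon|S|^2/(n(n-1))$, which is not tight enough near $|S|=k$. Here I additionally use that the \lsac\ promise transitively restricts large sub-cliques: any $s$-clique with $s$ close to $k$ could be extended with arbitrary additional vertices to form a size-$k$ subgraph whose density would exceed $1-\epsilon$, contradicting the negative case. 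This forces the clique number of $G$ to be at most roughly $\sqrt{1-\epsilon}\cdot k$, so the best attainable utility from any $|S|<k$ is $\Theta(\epsilon(1-\epsilon)\delta^2)$.

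Combining both regimes, the optimal \ugsc\ utility is $\Theta(\epsilon\delta^2)$ on positive \lsac\ instances and at most $\Theta(\epsilon(1-\epsilon)\delta^2)$ on negative ones, producing an additive gap of $\Theta(\epsilon^2\delta^2)$. I then invoke the assumed additive PTAS for \ugsc\ with error parameter $\Theta(\epsilon^2\delta^2)$, which runs in polynomial time for constant $\epsilon,\delta$, and I test the evaluated utility of its output against a threshold placed inside the gap to decide the \lsac\ instance. The main obstacle is the soundness analysis for $|S|<k$: since the \lsac\ promise directly constrains only subgraphs of size at least $k$, a careful combinatorial argument is needed to transfer the density restriction to smaller subsets. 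Without such a transfer, a near-clique of size just below $k$ could in principle match the positive-case utility and collapse the gap, rendering the PTAS useless for distinguishing the two cases.
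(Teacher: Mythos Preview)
Your overall framework matches the paper exactly: same reduction (cost $c=(1-\epsilon)/\emax$), same completeness computation, same case split on $|S|$ for soundness, and the same target gap $\Theta(\epsilon^2\delta^2)$. The $|S|\ge k$ case is handled identically.

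The gap is in your $|S|<k$ case. You already hold the right tool --- extending a subset with arbitrary vertices to reach size $k$ and invoking the \lsac\ density promise --- but you apply it only to hypothetical sub-cliques in order to bound the clique number. This detour fails on two counts. First, the arithmetic is off: an $s$-clique padded to size $k$ has density at least $\binom{s}{2}/k^2$, so the promise only rules out $s\gtrsim\sqrt{2(1-\epsilon)}\,k$, not $\sqrt{1-\epsilon}\,k$; with the $|E(S)|/|S|^2$ normalization this is vacuous whenever $\epsilon<1/2$. Second, even a valid clique-number bound does not by itself control $\mu_p(S)$ for an arbitrary (non-clique) $S$ --- you would still need a further Tur\'an-type step, and your convexity bound $\mu_p(S)\le\epsilon|S|^2/(n(n-1))$ alone still permits $\mu_p(S)\approx\epsilon\delta^2$ near $|S|=k$, collapsing the gap.

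The paper's fix is to apply your extension idea directly to $S$ rather than to cliques inside it: pad $S$ to a set $S'$ of size exactly $k$, so that $|E(S)|\le|E(S')|\le(1-\epsilon)k^2$ by the promise. This gives $f(S)=O((1-\epsilon)\delta^2)$, and combined with the convexity step $\mu_p(S)\le\epsilon f(S)$ (which you already derived), yields $\mu_p(S)=O(\epsilon(1-\epsilon)\delta^2)$ in one line. That is the ``careful combinatorial argument'' you flagged as the main obstacle, and it is considerably simpler than the clique-number route you proposed.
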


\begin{proof}
    We provide a reduction from \lsac \ to \ugsc \ in which the principal obtains $\Theta(\poly(\delta, \epsilon))$ more utility if and only if the \lsac \ instance contains a $k$-clique.
    Thus, a PTAS for \ugsc \ can be used to distinguish between the two cases of the \lsac \ problem in polynomial time.

    Given an instance of \lsac, we construct an instance of \ugsc \ with input graph $G$ and cost $c = (1-\epsilon)/\emax$.
    In positive cases of \lsac \ (i.e. $G$ contains a $\delta n$-clique) the principal will receive utility at least $\Omega(\epsilon \delta^2)$.
    We denote the $\delta n$-clique as $K$ and calculate the principal's utility if she incentivizes the agents in $K$ to act:
    \begin{align*}
        \mu_p(K) 
        &= \left(1 - (1 - \epsilon) \sum_{i\in S} \frac{1}{\deg_S(i)}\right) f(K) \\
        &= \left(1 - (1 - \epsilon) \frac{|S|}{|S| - 1}\right) f(K) \\
        &\ge \epsilon f(K) \\
        &= \Omega (\epsilon\delta^2)
    \end{align*}        

    For a negative instance of \lsac \ without a $\delta n$-clique, we consider two cases.
    In the first case, the principal pays at least $\delta n$ agents and in second case, she pays no more than $\delta n$ agents.
    We show that in both cases, the principal will receive utility no more than $O(\epsilon (1-\epsilon) \delta^2)$.
    This means that the principal receives $\Theta(\epsilon^2\delta^2)$ more utility for positive instances of \lsac.
    
    Let $S\subseteq V$ be the set of agents that the principal pays in the first case ($|S| \ge \delta n$).
    Since the graph does not contain a subgraph of size $\delta n$ with density greater than $1-\epsilon$, it must be that $S$ cannot have density greater than $1-\epsilon$ or some random sample of $S$ would have greater than $1-\epsilon$ density. Thus,
    \begin{align*}
        \frac{|E(S)|}{|S|^2} &\le (1-\epsilon) \\
        |E(S)| &\le (1-\epsilon) |S|^2
    \end{align*}
    As stated previously, the sum of inverse degree term is minimized when the degree is evenly distributed over every agent ($\deg_S(i) = \frac{2|E(S)|}{|S|}, \forall i\in S$).
    Using these two properties, we show that the principal gets no more than $0$ utility when she incentivizes the set of agents $S$ with $|S|\ge\delta n$:
    \begin{align*}
        \mu_p(S) 
        &= \left(1 - (1 - \epsilon) \sum_{i\in S} \frac{1}{\deg_S(i)}\right) f(S) \\
        &\le \left(1 - (1 - \epsilon) \sum_{i\in S} \frac{|S|}{2|E(S)|}\right) f(S)  &(\text{even degree}) \\
        &\le \left(1 - (1 - \epsilon) \frac{2|S|^2}{(1-\epsilon) |S|^2}\right) f(S) 
        &\text{(bound on edge count)} \\
        &\le 0
    \end{align*}

    Now consider the second case, in which the principal incentivizes agent set $S$ to take action with $|S|<\delta n$.
    It holds that $|E(S)|\le (1-\epsilon) |E(K)|$ where $|E(K)|$ is the number of edges a $\delta n$-clique would have, meaning $f(S) \le (1-\epsilon) f(K)$.
    We use the property that $2|E(S)|\le |S|^2$ and the previously mentioned property that the principal's utility is maximized when degrees are evenly distributed over agents ($\deg_S(i) = \frac{2|E(S)|}{|S|}, \forall i \in S$).
    The principal's utility if she incentivizes less than $\delta n$ agents is upper bounded as follows:
    \begin{align*}
        \mu_p(S)
        &\le \left(1 -  \frac{(1 - \epsilon) \cdot |S|^2}{2|E(S)|}\right) f(S) \le \left(1 -  \frac{(1 - \epsilon)\cdot |E(S)|}{|E(S)|}\right) f(S) = \epsilon f(S) \le \epsilon (1-\epsilon) f(K)\leq \epsilon \cdot \delta^2 ( 1- \epsilon).
    \end{align*}
    With this reduction, the principal can get utility $\epsilon f(K)$ if the \lsac \ instance contains a clique, otherwise she cannot derive more than $\epsilon (1-\epsilon) f(K)$ utility.
    This implies that a PTAS for \ugsc \ additive error set to $\Theta(\epsilon^2 \delta^2)$ can solve the linear sized almost clique problem efficiently.
\end{proof}
\section{Missing Proofs from Section~\ref{sec:PTAS}}
\linsizeopt*
\begin{proof}
    For the sake of contradiction, let $|S^*| \le \epsilon n$. Then $E(S^*) \le \frac{\epsilon\cdot n \cdot (\epsilon \cdot n - 1)} 2 $. This implies that,

    \begin{equation*}
        \mu_p(S^*) \leq R(S^*) = \frac{|E(S^*)|}{\emax} \le  \frac{\epsilon\cdot n(\epsilon\cdot n-1)}{n(n-1)} \le \epsilon \qedhere
    \end{equation*} 
\end{proof}

\subsection{Existence of Sampler}
\efficientsampler*
\begin{proof}
    We find the probability that a multiset $M$ of $\frac{3^6}{\epsilon^4} \log n$ nodes sampled uniformly at random from $V$ independently with replacement only contains nodes in the set $K$:
    \begin{align*}
        \Pr[v \in K \text{ for all } v\in M]
        = \left( \frac{|K|}{n} \right)^{\left( \frac{3^6}{\epsilon^4} \log n \right)}
        \geq \left(\frac{\epsilon}{3}\right)^{\left( \frac{3^6}{\epsilon^4} \log n \right)}
        = n^{\left( \frac{3^6}{\epsilon^4} \log \left(\frac{\epsilon}{3}\right) \right)}
    \end{align*}
    If $t=n^{\left(\frac{3^7}{\epsilon^4}\log \left(\frac{3}{\epsilon}\right)\right)}$ different multisets $M_1,\dots, M_t$ of size $\frac{3^6}{\epsilon^4} \log n$ are drawn uniformly and independently from $V$, then at least one only contains nodes from $K$ with probability $1 - \frac{1}{n^3}$:
    \begin{align*}
        \Pr[\text{exists } M_i \text{ with nodes only from $K$}]
        &= 1 - \Pr[\text{does not exist } M_i \text{ with nodes only from $K$}] \\
        &= 1 - (1 - \Pr[v \in K \text{ for all } v\in M_i])^{t} \\
        &= 1 - \left(1 - n^{\left( \frac{3^6}{\epsilon^4} \log \left(\frac{\epsilon}{3}\right) \right)}\right)^{n^{\left(\frac{3^7}{\epsilon^4}\log \left(\frac{3}{\epsilon}\right)\right)}} \\
        &\ge 1 - \frac{1}{n^3}
    \end{align*}
    
    Finally, we show that this sampler process yields a uniform sample over nodes in $K$.
    If $v$ is sampled $V$ uniformly at random, then $v$ is drawn uniformly at random from $K$ when conditioning for $v\in K$. Consider an arbitrary node $u\in K$:
    \begin{align*}
        \Pr[u = v\mid v\in K]
        = \frac{\Pr[(u = v)\cap (v\in K)]}{\Pr[v\in K]}
        = \frac{\Pr[u = v]}{\Pr[v\in K]} 
        = \frac{\left(\frac{1}{|V|}\right)}{\left(\frac{|K|}{|V|}\right)}
        = \frac{1}{|K|}
    \end{align*}
    Thus, we can efficiently find a sample of $O(\log (n))$ nodes drawn uniformly at random from $K$ with high probability even without knowledge of the set $K$. 
\end{proof}

\degreeestimates*
\begin{proof}
    As $|K| =\Omega(n)$, Lemma~\ref{lem:sampler} ensures the existence of efficient sampler which samples $m \ge\frac{3^6 \cdot \ln(n)}{\epsilon^4}$ many nodes uniformly at random from the set $K$ with replacement.
    The rest of the proof has two parts.
    To start, we consider a low degree node $v\in V$ with $\deg_{K}(v) < \frac{1}{9}\epsilon\beta n\le \frac{\beta}{3}|K|$ (use the fact that $|K|\ge\epsilon n/3$).
    The random variable $X$ is the number of neighbors node $v$ has in the multi set $M$ of $m$ nodes.
    The expected value of $X$ is $\mu = \frac{\deg_{K}(v)}{|K|}m$.

    \begin{align*}
        \Pr\left[ \frac{|K|}{m}X \ge\beta n \right]
        &\le \Pr\left[ \frac{|K|}{m}X \ge \beta |K| \right] && (|K|\leq n)  \\
        &= \Pr\left[ X \ge \beta m \right]\\
        &= \Pr[X\ge(1+\eta)\mu] && \left(\text{by setting } \eta = \frac{\beta m}{\mu} - 1\right) \\
        &\leq \exp\left(-\frac{\eta^2 \mu}{\mu + 2}\right)
    \end{align*}
    Set the above to less than $\frac{1}{2n^3}$, substitute in for $\eta$ and $\mu$, and isolate $m$ to yield:

    \begin{align*}
        m
        \ge \frac{\left(\frac{\beta|K|}{\deg_{K}(v)} + 1\right)}{ \beta \left(\frac{\beta|K|}{\deg_{K}(v)} - 2\right)} \ln(2n^3)
        \ge \frac{4\cdot \ln(2n^3)}{\beta}
    \end{align*}

    The last inequality holds because $\deg_{K}(v)\le\frac{\epsilon^3\beta}{3}|K|$ and the function $\frac{x+1}{x-2}$ is decreasing in $x$ when $x\geq 3$ and $\beta \leq \epsilon/3$.
    The second part of this proof shows that sampling yields good degree estimates on nodes with high degrees.
    A node $v\in V$ has high degree if $\deg_{K}(v) \ge \frac{1}{9}\epsilon\beta n$.
    Again, we let $X$ be a random variable representing the number of neighbors node $v$ has in the multi-set $M$ and let $\mu = \frac{\deg_{K}(v)}{|K|}m$ be the expected value of $X$.
    \begin{align*}
        \Pr\left[\left| \frac{|K|}{m}X - \deg_{K}(v) \right| > \epsilon \deg_{K}(v) \right]
        &=\Pr[|X - \mu| \ge \epsilon\mu] \le 2\exp\left(-\frac{\epsilon^2\mu}{3}\right) \\
        &= 2 \cdot \exp\left(-\frac{\epsilon^2 \deg_{K}(v) m}{3|K|}\right)\\
        &\leq 2 \cdot \exp\left(-\frac{\epsilon^3 \cdot \beta \cdot  m}{3^3|K|}\right)\leq \frac 1 {2n^3}.
    \end{align*}
    The second to last inequality holds because $|K|\le n$ and $\deg_{K}(v)>\frac{\epsilon\beta}{9} \cdot  n$ and the last inequality holds because  $m \ge\frac{3^6 \cdot \ln(n)}{\epsilon^4}$.
    A sample of size $m\ge\frac{3^6\ln(n)}{\epsilon^4}$ is sufficient for accurate degree approximates of low and high degree nodes.
    Applying union bounds over all $n$ nodes, the sample yields an accurate approximated well with a probability of at least $1/2n^2$.
\end{proof}

\subsection{Proof of Rounding Scheme}\label{alg:round}
\roundinglemma*
\begin{proof}
        We define the following desired events:
    \begin{enumerate}
        \item $\mathtt{OBJ}$ is the event when $\sum_{v\in S} \frac{c}{\hat{d}_v} < \sum_{v\in H} \frac{c \cdot \vec{x}^*_v}{\hat{d}_v} + \epsilon$.
        \item  $\mathtt{EDGE}$ is the event when $\sum_{v\in S}\hat{d}_v > (1-\epsilon)\sum_{v\in H}\hat{d}_v \vec{x}^*_v$
        \item For $v\in H$, $\mathtt{Deg}(v)$ is the event when $\deg_{S}(v) \geq (1-\epsilon) \sum_{\substack{u\in H\\(u,v)\in E}}\vec{x}^*_v$
    \end{enumerate}
    First,  we observe that if for the set $S$, event $\mathtt{OBJ_i}\cap \mathtt{EDGE_i} \bigcap_{v\in H} \mathtt{Deg}(v) \cap \mathtt{GoodSample}$ occurs then,
    \begin{align*}
        |E(S)|
        = \frac{1}{2}\sum_{v\in S}\deg_S(v)
        \ge \frac{1}{2}\left(\frac{1-\epsilon}{1+\epsilon}\right)\sum_{v\in S} \hat{d}_v
        \ge \frac{1}{2}\frac{(1-\epsilon)^2}{(1+\epsilon)}\sum_{v\in H} \hat{d}_v \vec{x}^*_v
        \ge \frac{(1-\epsilon)^3}{(1+\epsilon)} |E(\tilde S)|, 
    \end{align*}
    where the first inequality holds due to the event $\bigcap_{v\in H}\mathtt{Deg_i}(v)$ and $\mathtt{GoodSample}$, the second inequality holds due to the event $\mathtt{EDGE}$ and the last inequality holds due to constraint~\ref{lp:1_sufficient_edges}.
    Moreover, 
    \begin{align*}
            \sum_{v\in S}\frac{c}{\deg_S(v)}
            &\le \left(\frac{1+\epsilon}{1-\epsilon}\right)\sum_{v\in S}\frac{c}{\hat{d}_v}
            \le \left(\frac{1+\epsilon}{1-\epsilon}\right)\epsilon + \left(\frac{1+\epsilon}{1-\epsilon}\right)\sum_{v\in H}\frac{c \cdot \vec{x}^*_v}{\hat{d}_v} \\
            &\le \left(\frac{1+\epsilon}{1-\epsilon}\right)\epsilon + \frac{1+\epsilon}{(1-\epsilon)^2} \cdot \sum_{v\in \tilde S}\frac{c}{\deg_{\tilde S}(v)},
    \end{align*}
    where, the first inequality holds due to the event $\mathtt{OBJ}$ and the second inequality holds due to the fact that $\vec x^*$ is the optimal solution to LP and $\tilde{\vec x} = \mathbbm 1_{\tilde S}$ is feasible when conditioned on the event $\mathtt{GoodSample}$.
    \begin{align*}
        \mu_p(S)
        &= \left( 1-\sum_{v\in S}\frac{c}{\deg_S(v)} \right)\left(\frac{|E(S)|}{E_{max}}\right) \\
        &\ge \left( 1 - \left( \frac{1 + \epsilon}{1 - \epsilon} \right) \epsilon - \frac{1 + \epsilon}{(1 - \epsilon)^2} \sum_{v \in \tilde S} \frac{c}{\deg_{\tilde S}(v)} \right) \left( \frac{(1-\epsilon)^3}{(1 + \epsilon)} \frac{ |E(\tilde S)|}{ \emax} \right) \\
        &\ge \left( 1-\sum_{v\in \tilde S }\frac{c}{\deg_{\tilde S}(v)} \right)\left(\frac{|E(\tilde S)|}{E_{max}}\right) - 5\epsilon \\
        &\ge \mu_p(\tilde S) - 5\epsilon.
    \end{align*}
    Therefore, for the rest of the proof, we focus on proving that with probability at least $1 - \frac{1}{\poly(n)}$, the event $\mathtt{OBJ_i}\cap \mathtt{EDGE_i} \bigcap_{v\in H} \mathtt{Deg}(v) \cap \mathtt{GoodSample}$ occurs.
    
    First, we condition on the event $\mathtt{GoodSample}$. Next, we apply Hoeffding's bound to bound the probability that $\mathtt{OBJ}$ complement occurs:
    \begin{align*}
        \Pr[\mathtt{OBJ}^c \mid \mathtt{GoodSample}]
        &= \Pr\left[\sum_{v\in S}\frac{c}{\hat{d}_v}
        \ge \sum_{v\in H}\frac{c\cdot\vec{x}^*_v}{\hat{d}_v} + \epsilon\right]
        = \Pr\left[\sum_{v\in S}\frac{c}{\hat{d}_v}
        \ge \mathbb{E}\left[\sum_{v\in S}\frac{c}{\hat{d}_v}\right] + \epsilon\right] \\
        &\le \exp\left(-\frac{2\epsilon^2}{\sum_{v\in H} \left(c/\hat{d}_v\right)^2}\right)\\
        &\leq \exp\left(-\frac{2\epsilon^2 }{\sum_{v\in H} \left(1/\hat{d}_v\right)^2}\right)
        \leq  \exp\left(-\frac{2\epsilon^2 }{ \sum_{v\in H} \left(\frac 1 {\beta\cdot n}\right)^2}\right)\\
        &=\exp\left(-\frac{2\epsilon^2 \cdot \beta^2 \cdot n^2}{|H|}\right)\leq \exp\left(-{2\epsilon^2 \cdot \beta^2 \cdot n }\right) .
    \end{align*}
    Above, the first inequality holds due to Hoeffding.
    The second inequality holds because $c \in (0,1)$.
    The third inequality holds because under event $\mathtt{GoodSample}$, $ \hat d_v \geq \beta \cdot n$.
    The last inequality holds because $|H|\leq n$.
    
    The probability that $\mathtt{EDGE}$ complement occurs can be bounded using Chernoff bounds:
    \begin{align*}
        \Pr[\mathtt{EDGE}^c \mid \mathtt{GoodSample}]
        &= \Pr\left[\sum_{v\in S}\hat{d}_v
        \le (1-\epsilon)\sum_{v\in H}\hat{d}_v \vec{x}^*_v\right]
        = \Pr\left[\sum_{v\in S}\hat{d}_v
        \le (1-\epsilon)\mathbb{E}\left[\sum_{v\in S}\hat{d}_v \right]\right] \\
        &\le\exp\left( -\frac{2\epsilon^2 \mathbb{E}\left[ \sum_{v\in S}\hat{d}_v \right]^2}{n\left(\max_{v\in H} \hat{d}_v \right)^2} \right)
        \le\exp\left( -\frac{2\epsilon^4 n}{9} \right).
    \end{align*}
    Above, we use the fact that $\mathbb{E}\left[\sum_{v\in S} \hat{d}_v \right] = \sum_{v\in H}\hat{d}_v \vec{x}^*_v \ge 2(1-\epsilon)|E(\tilde S)|\ge|E(\tilde S)|\ge\frac{1}{3}\epsilon n^2$ due to constraint~(\ref{lp:1_sufficient_edges}) and Lemma~\ref{lem:solid-core}.  

    Next, the probability that $\mathtt{Deg}(v)$ complement occurs can be bounded using Chernoff bounds.
    We have,
    \begin{align*}
        \Pr\left[\deg_S(v)
        \le (1-\epsilon)\sum_{\substack{u\in H\\(u,v)\in E}}\vec{x}^*_v  \mid \mathtt{GoodSample}\right]
        &= \Pr\left[\deg_S(v)
        \le (1-\epsilon)\mathbb{E}\left[\deg_S(v)\right] \right]  \\
        &\le\exp\left( -\frac{\epsilon^2 \mathbb{E}\left[\deg_S(v)\right]}{2} \right)
        \le\exp\left( -\frac{\epsilon^2 \beta n}{4} \right)
        \le\exp\left( -\frac{\epsilon^3 n}{24} \right).
    \end{align*}
    The second inequality holds because $\mathbb{E}\left[ \deg_S(v) \right]\ge \frac{\beta n}{1 + \epsilon} \ge \frac{\beta n}{2}$ by the sufficient degree constraint (\ref{lp:1_degrees_lowerbound}).
    Combining the above inequalities, we have,
    
    \begin{align}\label{eq:prob_bound_good_Si}
        &\Pr\left[\mathtt{OBJ}\cap \mathtt{EDGE} \bigcap_{v\in H} \mathtt{Deg}(v) \cap \mathtt{GoodSample}\right]\\
         =& \Pr\left[\left.\mathtt{OBJ}\cap \mathtt{EDGE} \bigcap_{v\in H} \mathtt{Deg}(v) \right| \mathtt{GoodSample}\right] \cdot \Pr[\mathtt{GoodSample}]\\
        \ge& \left(1 - \Pr[\mathtt{OBJ}^c\mid \mathtt{GoodSample}] - \Pr[\mathtt{EDGE}^c\mid \mathtt{GoodSample}] - \sum_{v\in H} \Pr[\mathtt{Deg}(v)^c \mid \mathtt{GoodSample}] \right) \cdot  \Pr[\mathtt{GoodSample}] \notag \\
        \ge& \left( 1 - \exp \left(  - 2 \epsilon^2\cdot \beta^2 \cdot n\right) - \exp\left( -\frac{2\epsilon^4 n}{9} \right)  - 2n\cdot\exp\left( -\frac{\epsilon^3 n}{24} \right) \right) \left( 1 - \frac{1}{2n^2} \right) \notag   \\
        \geq& 1 - \frac{1}{n^2},
    \end{align}
    where the second to last inequality holds due to  Observation~\ref{obsn:high-prob-event} and the last inequality holds because $n\geq \frac{2^2\cdot 3^4}{\epsilon^8}$ and $\epsilon \leq 1/7.$
\end{proof}

\end{document}